\newcommand{\rasmus}[1]{}
\newcommand{\sushant}[1]{}
\newcommand{\mprobst}[1]{}
\newcommand{\cella}[1]{}
\newcommand{\todo}[1]{\textbf{\color{red}[TODO: #1]}}
\newcommand{\R}{\mathbb{R}}
\newcommand{\matr}[1]{\boldsymbol{#1}}
\newcommand{\vol}{\mathrm{vol}}
\newcommand{\conge}{\mathrm{cong}}
\newcommand{\opt}{\mathrm{opt}}
\DeclareMathOperator{\poly}{poly}
\newcommand{\Otil}{\tilde{O}}
\newcommand{\trp}{\top}
\DeclarePairedDelimiter\abs{\lvert}{\rvert}
\DeclarePairedDelimiter\norm{\lVert}{\rVert}
\renewcommand{\epsilon}{\ensuremath\varepsilon}
\renewcommand{\phi}{\ensuremath{\varphi}}
\newtheorem{theorem}{Theorem}[section]
\newtheorem{corollary}[theorem]{Corollary}
\newtheorem{lemma}[theorem]{Lemma}
\newtheorem{claim}[theorem]{Claim}
\newtheorem{fact}[theorem]{Fact}
\newtheorem{Informal Theorem}[theorem]{Informal Theorem}
\theoremstyle{definition}
\newtheorem{definition}[theorem]{Definition}
\newtheorem{remark}[theorem]{Remark}
\newtheorem*{theorem*}{Theorem}
\newtheorem*{corollary*}{Corollary}
\newtheorem*{conjecture*}{Conjecture}
\newtheorem*{lemma*}{Lemma}
\newtheorem*{thm*}{Theorem}
\newtheorem*{prop*}{Proposition}
\newtheorem*{obs*}{Observation}
\newtheorem*{definition*}{Definition}
\newtheorem*{remark*}{Remark}
\newtheorem*{rec*}{Recommendation}
\newif\ifblind
\title{Optimal Electrical Oblivious Routing on Expanders}
\author{}
\author{Cella Florescu  \\ ETH Zurich \\ cella.florescu@inf.ethz.ch \and Rasmus Kyng\thanks{The research leading to these results has received funding from the grant ``Algorithms and complexity for high-accuracy flows and convex optimization'' (no. 200021 204787) of the Swiss National Science Foundation.} \\ ETH Zurich  \\ kyng@inf.ethz.ch \and Maximilian Probst Gutenberg\footnotemark[1]\\ ETH Zurich \\ maxprobst@ethz.ch \and Sushant Sachdeva\thanks{Sushant Sachdeva was supported by an NSERC Discovery Grant RGPIN-2018-06398, an Ontario Early Researcher Award (ERA) ER21-16-284, and a Sloan Research Fellowship.}
\\ University of Toronto \\ sachdeva@cs.toronto.edu}
\date{}
\begin{document}

\setlength{\abovedisplayskip}{5pt}
\setlength{\belowdisplayskip}{5pt}

\pagenumbering{gobble}

\maketitle

\begin{abstract}
In this paper, we investigate the question of whether the electrical flow routing is a good oblivious routing scheme on an $m$-edge graph $G = (V, E)$ that is a $\Phi$-expander, i.e. where $|\partial S| \geq \Phi \cdot \vol(S)$ for every $S \subseteq V, \vol(S) \leq \vol(V)/2$. Beyond its simplicity and structural importance, this question is well-motivated by the current state-of-the-art of fast algorithms for $\ell_{\infty}$ oblivious routings that reduce to the expander-case which is in turn solved by electrical flow routing. 

Our main result proves that the electrical routing is an $O(\Phi^{-1} \log m)$-competitive oblivious routing in the $\ell_1$- and $\ell_\infty$-norms. We further observe that the oblivious routing is $O(\log^2 m)$-competitive in the $\ell_2$-norm and, in fact, $O(\log m)$-competitive if $\ell_2$-localization is $O(\log m)$ which is widely believed. 

Using these three upper bounds, we can smoothly interpolate to obtain upper bounds for every $p \in [2, \infty]$ and $q$ given by $1/p + 1/q = 1$. Assuming $\ell_2$-localization in $O(\log m)$, we obtain that in $\ell_p$ and $\ell_q$, the electrical oblivious routing is $O(\Phi^{-(1-2/p)}\log m)$ competitive. Using the currently known result for $\ell_2$-localization, this ratio deteriorates by at most a sublogarithmic factor for every $p, q \neq 2$.

We complement our upper bounds with lower bounds that show that the electrical routing for any such $p$ and $q$ is $\Omega(\Phi^{-(1-2/p)}\log m)$-competitive. This renders our results in $\ell_1$ and $\ell_{\infty}$ unconditionally tight up to constants, and the result in any $\ell_p$- and $\ell_q$-norm to be tight in case of $\ell_2$-localization in $O(\log m)$. 
\end{abstract}

\pagebreak

\pagenumbering{arabic}

\newcommand{\package}{\emph}

\section{Introduction}

In this paper, we study flow-routing problems on connected, undirected (multi-)graphs $G = (V,E)$.
A broad and well-studied class of single-commodity flow problems arises by seeking a flow $\matr{f} \in \R^E$ that routes given demands $\matr{\chi} \in \R^V$, while minimizing a \(\ell_p\)-norm of the flow. 
Denoting the graph edge-vertex incidence matrix by $\matr{B} \in \R^{V \times E}$,
we can write these optimization problems as 
\begin{equation}
\min_{\matr{B}\matr{f}
= \matr{\chi}}\norm{\matr{f}}_{p}.
\label{equation_singlecomm}
\end{equation}
The case $p = \infty$ is known as undirected maximum flow, 
while $p = 2$ is called electrical flow and $p = 1$ is called transshipment. Here we focus for simplicity on the unweighted setting, but all results in this paper and in related work can in fact be extended to work in weighted graphs.

We can generalize these flow problems to the multi-commodity case by allowing a collection of demands \(\{\matr{\chi}_i\}\)  to be routed simultaneously by a collection of flows
\(\{\matr{f}_i\}\), while minimizing a single objective on all of them.
\begin{equation}
\min_{\matr{B}\matr{f}_i = \matr{\chi}_i, \forall i}
\norm{
\sum_i \abs{\matr{f}_i}}_{p}.
\label{equation_multicomm}
\end{equation}
For any $p$, solutions with $(1+1/\poly(|E|))$-multiplicative error  to these problems can be computed in polynomial time and for the single-commodity setting even in almost-linear time \cite{KPSW19, CKLPGS22}. For the special cases of $p = 1,2,\infty$, optimal solutions can be computed in polynomial time via linear/convex programming.

However, in many settings, we may want to sacrifice optimality of our routing solutions for simplicity of the routing algorithm. 
A particularly simple and popular approach is \emph{oblivious routing}, where a collection of routing paths are chosen in advance between every pair of nodes, without knowing the demands that will be eventually routed.
Historically, oblivious routings were first studied on specific networks, specifically the hypercube~\cite{ValiantB81, valiant1982scheme}. 
 A deeply influential technique in this area is the work of Rack\"{e}~\cite{R02}. 
An \emph{oblivious routing} is linear operator $\matr{A} \in \R^{E \times V}$ that maps any valid\footnote{
A demand $\matr{\chi}$ can be routed on a connected graph iff $\sum_v \chi(v) = 0$.}  demand vector $\matr{\chi} \in \R^V$
to a flow $\matr{f} = \matr{A} \matr{\chi}$ that routes $\matr{\chi}$. This extends to routing multiple demands in the multi-commodity setting,
\(\{ \matr{f}_i = \matr{A} \matr{\chi}_i\}\).

Conceptually, a highly attractive feature is that multiple demand vectors can be routed simultaneously without knowing the other demands, and a single demand can be broken down into multiple terms, e.g. source-sink demand pairs, and routings of each pair can be again computed separately.
These features make oblivious routings ideal for online routing problems -- which was the original motivation for their construction \cite{R02}.


As mentioned above, using oblivious routing comes at the sacrifice of optimality. To get a quantitative measure of the loss a routing scheme $\matr{A}$ might incur in the $\ell_p$ metric, we define the \emph{competitive ratio} of $\matr{A}$, denoted by $\rho_p(\matr{A})$, to be the maximal ratio between the objective value achieved by an oblivious routing $\matr{A}$ and the optimal solution achieved by \emph{any} (multi-commodity) demand. 

In a ground-breaking sequence of papers \cite{R02,ACFKR03,ER09}, R\"{a}cke, Azar, Cohen, Fiat, Kaplan, and Englert showed that for all \(\ell_p\)-norms, oblivious routings with competitive ratio $\Otil(1)$ exist\footnote{
We use $\Otil(\cdot)$ to hide polylogarithmic factors in the graph size $m$.
}. In fact, for the well-studied setting of $p = \infty$, \cite{R08} gave an optimal construction with $O(\log m)$ competitive ratio in polynomial time, matching a $\Omega(\log m)$ lower bound \cite{BL97, MMadHVW97}.

\paragraph*{Fast Algorithms and Applications for $\ell_\infty$ Oblivious Routing} $\ell_\infty$ oblivious routings are a fundamental tool in obtaining fast approximate maximum flow algorithms in undirected graphs. Building on the techniques in \cite{ST04, M10}, \cite{KLOS, S13} give algorithms that show that $O(\poly(\alpha/ \epsilon))$ applications of a $\alpha$-competitive $\ell_\infty$ oblivious routing yield $(1+\epsilon)$-approximate maximum flow on undirected graphs by using gradient descent. 
They then developed almost-linear time algorithms for $m^{o(1)}$-competitive $\ell_\infty$ oblivious routing. 
As a result, they obtained approximate undirected maximum flow in time $m^{1+o(1)} \poly(1/\epsilon)$ -- one of the major recent breakthroughs in modern graph algorithms. 

Later, \cite{RST14} gave a reduction from computing $\Otil(1)$-competitive $\ell_\infty$ oblivious routings to approximate maximum flows resulting in a $m^{1+o(1)}$ time algorithm.  \cite{P16} then showed that combining these approaches recursively yields a $\Otil(m)$ algorithm to compute $\Otil(1)$-competitive $\ell_\infty$ oblivious routings and a $\Otil(m\poly(1/\epsilon))$ algorithm for $(1+\epsilon)$-approximate undirected maximum flow \cite{P16}. 
Recently, \cite{goranci2021expander} presented an alternative, simple $\Otil(m)$ algorithm to obtain (and maintain) $\ell_{\infty}$ oblivious routings with subpolynomial competitive factor.

While recently the first \emph{exact} maximum flow algorithm with runtime $m^{1+o(1)}$ was given in \cite{CKLPGS22}, $\ell_{\infty}$ oblivious routings and approximate undirected maximum flow remain important tools with many algorithms crucially relying on them as subroutines to obtain runtime $\Otil(m)$.

We point out that above, for simplicity, we did not properly distinguish between $\ell_{\infty}$ oblivious routings which are only constructed in \cite{KLOS}, and their weaker counterparts \emph{congestion approximators} which are used in all other constructions. A congestion approximator is a linear operator $\matr{C}$ that maps each demand $\matr{\chi}$ to vector $\matr{c} = \matr{C}\matr{\chi}$ such that $\|\matr{c}\|_{\infty}$ approximates the objective value of \eqref{equation_multicomm}. Note, that $\matr{c}$ is not necessarily a flow.  


\paragraph*{$\ell_\infty$ Oblivious Routing on Expanders and in General Graphs} 
\emph{Valiant's trick}~\cite{valiant1982scheme}, a popular scheme that routes demands from each source to a set of randomly chosen intermediate nodes before routing them to the destination, establishes the existence of $O(\Phi^{-1}\log n)$-competitive $\ell_{\infty}$-oblivious routings in expanders. However, implementing Valiant's trick algorithmically requires computing multi-commodity flows, which are expensive to compute.

To the best of our knowledge, the only fast algorithm that computes an $\ell_{\infty}$ oblivious routing on general graphs is given in \cite{KLOS}. 
In their approach, they first reduce the problem to finding an $\ell_{\infty}$ oblivious routing on a $\Phi$-expander with unit-weights (in this case $\Phi = m^{-o(1)}$). They then exploit a simple but striking statement, previously demonstrated by Kelner and Maymounkov \cite{KM11}: the electrical flow routing, henceforth denoted by $\matr{A}_{\mathcal{E}}$, on a $\Phi$-expander is a $O(\Phi^{-2} \log m)$-competitive $\ell_{\infty}$-routing. 
It was later observed by Schild-Rao-Srivastava~\cite{SRS18} that on unweighted graphs, the statement can be derived from Cheeger's Inequality (\cite{C70,AM85}).
Further, the electrical flow routing can be applied efficiently after $\Otil(m)$ preprocessing, due to the breakthrough result by Spielman and Teng \cite{ST04} and subsequent work \cite{kelner2013simple, cohen2014solving, kyng2016approximate, jambulapati2021ultrasparse}\footnote{Technically, only a high-accuracy solution is obtained which suffices for our application.}. \cite{KLOS} then demonstrates that by assembling and combining these routings on expanders, one obtains an $\ell_{\infty}$ oblivious routing of the entire graph that can be evaluated efficiently.

As far as we know, no other fast algorithm is currently known to compute $\ell_{\infty}$ oblivious routing, and all fast algorithms that compute congestion approximators again reduce to expanders on which cuts can be approximated by stars. Therefore, to the best of our knowledge, every almost-linear time approach to constructing $\ell_{\infty}$ oblivious routing  reduces to expanders, and on expanders the only known fast algorithm for obtaining an $\ell_{\infty}$ oblivious routing is to use the electrical flow routing.

\paragraph*{Oblivious Routing for any $\ell_{p}$} Analogous to the reduction of solving approximate undirected maximum flow via few applications of $\ell_{\infty}$ oblivious routing, Sherman later showed in \cite{sherman2017generalized}, that any \(\ell_p\)-norm minimizing flow on undirected graphs can be computed to an $(1+\epsilon)$-approximation by applying $\ell_p$ oblivious routings with $\alpha$ competitive ratio $\Otil(\poly(\alpha/\epsilon))$ times via gradient descent. 

While we are not aware of any article studying fast algorithms for the general \(\ell_p\)-norm, the $\ell_1$-norm has received considerable attention and $\Otil(m)$ time algorithms were given with competitive ratio $\Otil(1)$ \cite{li2020faster, ZuzicGYHS22,  rozhovn2022undirected},
and adapted to fully-dynamic graphs in \cite{ChenKLMP23}.

\paragraph*{Oblivious Routing for $\ell_{1}$ on Expanders} Further, at least in the unit-capacity setting, the result by Kelner and Maymounkov \cite{KM11} extends seamlessly to the $\ell_1$-norm, i.e. the electric flow routing $\matr{A}_{\mathcal{E}}$ has competitive ratio $O(\Phi^{-2}\log m)$. This follows since the electrical flow routing is given by $\matr{A}_{\mathcal{E}} = \matr{B}^{\trp} \matr{L}^+$, where $\matr{B}$ is the vertex-edge
incidence matrix and $\matr{L} = \matr{B}\matr{B}^\trp$ is the
Laplacian matrix of the graph and then bounding the $\ell_{\infty}$ competitive ratio of the oblivious routing for multicommodity flow problems is equivalent to bounding the quantity $\norm{\abs{\matr{A}_{\mathcal{E}} \matr{B}}}_{\infty \to \infty}$, where $\abs{\cdot}$ denotes the entrywise absolute value, while the $\ell_{1}$ competitive ratio equals $\norm{\abs{\matr{A}_{\mathcal{E}}\matr{B}}}_{1 \to 1}$. The matrix $\Pi = \matr{A}_{\mathcal{E}} \matr{B} = \matr{B}^{\trp}\matr{L}^+\matr{B}$ is a frequently-studied orthogonal projection matrix and it is a symmetric matrix, since $\matr{B}^{\trp} \matr{L}^+ \matr{B} = \matr{B}^\trp\matr{A}_{\mathcal{E}}^\trp  = (\matr{A}_{\mathcal{E}} \matr{B})^\trp = (\matr{B}^{\trp} \matr{L}^+ \matr{B})^{\trp}$ where we use that $\matr{L}^+$ is symmetric. But it is further well-known that  $\norm{\matr{X}}_{\infty \to \infty} = \norm{\matr{X}^{\trp}}_{1
  \to 1} $ and thus we have that $\norm{\abs{\Pi}}_{\infty \to \infty} = \norm{\abs{\Pi}}_{1 \to 1}$, i.e. the competitive ratios achieved by $\matr{A}_{\mathcal{E}}$ in $\ell_1$- and $\ell_\infty$-norm are equal.

\paragraph*{Beyond Oblivious Routing} The quantity $\norm{\abs{\Pi}}_{\infty \to 1}$  is important in several other contexts: It
captures the so-called \emph{localization} of electrical flow on the
graph \cite{SRS18}.
Localization measures the $\ell_1$-length of the electrical flow
corresponding to a demand placed at two endpoints of an edge, averaged over all edges.
The bound $\rho_1(\matr{A}_{\mathcal{E}}) = O(\Phi^{-2} \log m)$ implies a stronger statement in expanders: For every such edge-demand, the $\ell_1$-length is bounded by $O(\Phi^{-2} \log m)$.
It is known that in general graphs, localization is bounded by
$O(\log^2 m)$ -- but it is open whether $O(\log m)$ holds (a lower
bound of $\Omega(\log m)$ is known) although it is widely believed. From \cite{KM11}, we see that any graph with expansion
$1/o(\sqrt{\log m})$ achieves localization $o(\log^2 m)$.

Localization has been used in a number of contexts, including sampling random spanning trees in almost-linear time \cite{S18}, computing spectral subspace sparsification \cite{LS18} of Laplacian matrices, and building oblivious routings using $\Otil(\sqrt{m})$ electrical flows \cite{GHRSS23}.

An interesting message of our paper is that electrical flow on expanders simultaneously is an excellent $\ell_1$ and $\ell_\infty$ oblivious routing, i.e. it uses flow paths are are both short and low congestion.
A broad theory of expanders that simultaneously allow for short and low-congestion paths has recently been developed in \cite{GHZ21,HRG22}, allowing for other possible trade-offs between length and congestion than those obtained by conventional expanders.

\subsection{Main Contributions}


In this article, we study a simple but important question:
\begin{quoting}
Given \textbf{any} $p \in [1, \infty]$, what is the competitive ratio $\rho_p(\matr{A}_{\mathcal{E}})$ of the electrical flow routing $\matr{A}_{\mathcal{E}}$ on a $\Phi$-expander? 
\end{quoting}

We first settle this question for the important cases when $p \in \{1, \infty\}$ by proving the following theorem that proves an upper bound that is tight up to constant factors.
\begin{theorem}
  \label{thm:mainLinf}
  For a \(\Phi\)-expander multigraph \(G = (V, E) \)
  with edge-vertex incidence matrix $\matr{B}$ and Laplacian $\matr{L}$,
  the electrical routing $\matr{A}_{\mathcal{E}} =
  \matr{B}^{\trp}\matr{L}^+$ has competitive ratios $\rho_\infty$ and $\rho_1$ for multi-commodity
  $\ell_{\infty}$ and $\ell_{1}$ routing both bounded by 
  \[
    \rho_{\infty}(\matr{A}_{\mathcal{E}}),
    \rho_{1}(\matr{A}_{\mathcal{E}})
    \leq 3 \cdot \frac{\log(2m)}{\Phi}
  \]
\end{theorem}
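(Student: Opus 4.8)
By the observations recalled above, the two competitive ratios coincide and equal the largest $\ell_1$-norm of a row of the symmetric projection matrix $\Pi = \matr{B}^\trp\matr{L}^+\matr{B} = \matr{A}_{\mathcal E}\matr{B}$; by symmetry this equals the largest $\ell_1$-norm of a \emph{column} of $\Pi$, and the column indexed by an edge $e=(u,v)$ is $\Pi\matr{b}_e = \matr{A}_{\mathcal E}(\matr{e}_u-\matr{e}_v) =: \matr{f}^e$, the electrical unit flow from $u$ to $v$. So it suffices to show $\norm{\matr{f}^e}_1 \le 3\log(2m)/\Phi$ for every edge $e=(u,v)$. Fix $e$, let $\phi := \matr{L}^+(\matr{e}_u-\matr{e}_v)$ be the potentials normalised so that $\phi(v)=0$, and recall the standard facts $\phi(u) = R_{\mathrm{eff}}(u,v) =: h \le 1$ (the edge $e$ is itself a $u$--$v$ path of resistance $1$), $0\le\phi(w)\le h$ for all $w$ (maximum principle), and $\matr{f}^e_{xy} = \phi(x)-\phi(y)$. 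The coarea formula gives $\norm{\matr{f}^e}_1 = \sum_{(x,y)\in E}\abs{\phi(x)-\phi(y)} = \int_0^h\abs{\partial S_t}\,dt$ with $S_t := \{w : \phi(w) > t\}$; using only the trivial bound $\abs{\partial S_t}\le\min\bigl(\vol(S_t),\vol(V\setminus S_t)\bigr)$ and splitting at the balance point $t^\ast$ where $\vol(S_{t^\ast})=m$,
\[
\norm{\matr{f}^e}_1 \;\le\; \int_0^{t^\ast}\vol(V\setminus S_t)\,dt \;+\; \int_{t^\ast}^{h}\vol(S_t)\,dt ,
\]
and by the $u\leftrightarrow v$ symmetry it suffices to bound the second integral by $1.5\log(2m)/\Phi$.

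The crux is a decay estimate saying that, in a $\Phi$-expander, the superlevel sets of an electrical potential blow up in volume quickly as one moves away from an endpoint:

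\emph{Key Lemma.} For every $t\in[t^\ast,h)$, $\;h - t \;\le\; \dfrac{C}{\Phi\,\vol(S_t)}\;$ for an absolute constant $C$ (symmetrically $t\le C/(\Phi\,\vol(V\setminus S_t))$ on the $v$-side).

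Once this is available the theorem follows by routine bookkeeping. Introducing dyadic thresholds $\sigma_0 > \sigma_1 > \cdots > \sigma_J = t^\ast$ with $\vol(S_{\sigma_j})\approx W_j := 2^j\deg(u)$ — so that $J = O(\log m)$, using $\vol(V)=2m$ — we get
\[
\int_{t^\ast}^{h}\vol(S_t)\,dt \;=\; \int_{\sigma_0}^{h}\vol(S_t)\,dt + \sum_{j=0}^{J-1}\int_{\sigma_{j+1}}^{\sigma_j}\vol(S_t)\,dt \;\le\; \deg(u)\,(h-\sigma_0) + \sum_{j=0}^{J-1} W_{j+1}\,(\sigma_j-\sigma_{j+1}),
\]
and since $h-\sigma_0\le C/(\Phi\deg(u))$ and $\sigma_j-\sigma_{j+1}\le h-\sigma_{j+1}\le C/(\Phi W_{j+1})$ by the Key Lemma, every summand is $O(1/\Phi)$ and the sum is $O((J{+}1)/\Phi) = O(\log(m)/\Phi)$; the $v$-side integral is handled identically. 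Tracking constants carefully — at most $\log_2(2m)$ scales per side since $\vol(V)=2m$, together with the split of $\abs{\partial S_t}\le\min(\cdot,\cdot)$ — tightens the bound to $3\log(2m)/\Phi$. (Equivalently one may substitute the continuous form directly: $\int_{t^\ast}^{h}\vol(S_t)\,dt = \int_0^{h-t^\ast}\vol(S_{h-s})\,ds \le \int_0^{h-t^\ast}\min\bigl(m,\,C/(\Phi s)\bigr)\,ds = O(\log(2m)/\Phi)$.)

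I expect \emph{proving the Key Lemma} to be the main obstacle. The right intuition is that the single unit of current leaving $u$ must cross every cut $\partial S_\sigma$ for $\sigma\in(t,h)$, and expansion forces $\abs{\partial S_\sigma}\ge\Phi\,\vol(S_\sigma)$ so that the current spreads over many parallel edges; the cumulative voltage drop $h-t$ should therefore be $O\bigl(1/(\Phi\,\vol(S_t))\bigr)$. Concretely I would prove it scale by scale: show that the voltage drop $\sigma_j-\sigma_{j+1}$ across the $j$-th annulus $S_{\sigma_{j+1}}\setminus S_{\sigma_j}$ — which has volume $\Theta(W_j)$ and carries the whole unit of current — is $O(1/(\Phi W_j))$, by combining the conservation identity $\sum_{(x,y)\in\partial S_\sigma}(\phi(x)-\phi(y))=1$ (valid for all $\sigma\in(0,h)$), a coarea accounting of the flow's energy restricted to the annulus (against the budget $R_{\mathrm{eff}}(u,v)=h\le1$), and the expansion bound on the intermediate cuts. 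The delicate point is that this must use expansion in a genuinely combinatorial way: routing it through the spectral bound $\lambda_2\ge\Phi^2/2$ (Cheeger's inequality), or through the crude per-cut Cauchy--Schwarz estimate $\abs{\partial S_\sigma}\,\cdot\,(\text{energy across }\partial S_\sigma)\ge1$, only yields $h-t\lesssim1/(\Phi^2\vol(S_t))$ and hence the weaker $\Phi^{-2}$ ratio of Kelner--Maymounkov; extracting the sharp $\Phi^{-1}$, and handling the edges that "skip" a dyadic scale, is where the real work lies.
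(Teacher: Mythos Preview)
Your Key Lemma is false, and the failure is structural: the step $|\partial S_t|\le\vol(S_t)$ already discards too much to ever recover the $\Phi^{-1}$ bound.

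Take the paper's own test graph: a direct edge $(u,v)$ together with $k$ vertex-disjoint $u$--$v$ paths of length $k$. Here $\Phi=\Theta(1/k)$, $m=\Theta(k^2)$, and for the unit electrical $u$--$v$ flow one has $h=R_{\mathrm{eff}}(u,v)=1/2$ with potentials dropping linearly along each path. At $t=1/4$ (your balance point) one computes $\vol(S_t)=\Theta(k^2)$, so the Key Lemma would assert $1/4=h-t\le C/(\Phi\,\vol(S_t))=\Theta(1/k)$, which fails for large $k$. Worse, in this example $\int_{t^\ast}^{h}\vol(S_t)\,dt=\Theta(k^2)$ while $\log(m)/\Phi=\Theta(k\log k)$, so the very quantity you set out to bound is genuinely too large. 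Replacing $|\partial S_t|$ by $\vol(S_t)$ is precisely the relaxation from $\rho'=\sum_{(a,b)}w(a,b)|\phi(a)-\phi(b)|$ to $\rho''=\sum_a \matr{d}(a)|\phi(a)-c|$ that the paper isolates (in its ``Contrast with Kelner--Maymounkov'' discussion) as the source of the extra $\Phi^{-1}$ loss; no subsequent lemma can undo it.

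The paper never upper-bounds $\delta(t):=|\partial S_t|$ by a volume. Instead it introduces a continuous ``fractional volume'' $\vol_\ge(t)$ whose derivative at a generic threshold is
\[
-\tfrac{d}{dt}\vol_\ge(t)\;=\;2\sum_{(a,b)\in C_t}\frac{w(a,b)}{\phi(b)-\phi(a)}.
\]
Cauchy--Schwarz against your own conservation identity $\sum_{(a,b)\in C_t}w(a,b)\bigl(\phi(b)-\phi(a)\bigr)=1$ gives the pointwise bound $-\vol_\ge'(t)\ge 2\,\delta(t)^2$. Expansion is then used once, as $\delta(t)\ge\Phi\,\vol(S_t)\gtrsim\Phi\,\vol_\ge(t)$, to turn this into $\delta(t)\le -\tfrac{3}{2\Phi}\bigl(\ln\vol_\ge^+(t)\bigr)'$; integrating over each half of the voltage range yields $\int\delta(t)\,dt\le\tfrac{3}{2\Phi}\ln(\vol(V))$ directly, with no dyadic bookkeeping. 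The essential difference from your outline is that $\delta(t)$ is compared to the \emph{rate of change} of volume rather than to volume itself; the Cauchy--Schwarz step manufactures the square $\delta(t)^2$, so that a single invocation of $\delta(t)\ge\Phi\,\vol(S_t)$ suffices and no second factor of $\Phi$ is spent.
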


The Riesz-Thorin theorem then gives us a way to smoothly interpolate between the upper bounds of any two $\ell_{p_1}$- and $\ell_{p_2}$-norm competitive ratios to obtain an upper bound on the \(\ell_p\)-norm competitive ratio $\rho_p(\matr{A}_{\mathcal{E}})$ for any $p_1 < p < p_2$. Using a smooth interpolation between our results for $\ell_1$- and $\ell_\infty$-norm, we thus obtain the following more general result.

\begin{theorem}
\label{thm:mainLp}
For a \(\Phi\)-expander multigraph \(G = (V, E) \), and any $p \in [1,\infty]$, we have that the competitive ratio of $\matr{A}_\mathcal{E}$ is
\[
\rho_{p}(\matr{A}_{\mathcal{E}})
\leq 3 \cdot \frac{\log(2m)}{\Phi}.
\]
\end{theorem}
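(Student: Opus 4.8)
The plan is to interpolate between the two endpoint bounds of Theorem \ref{thm:mainLinf} using the Riesz--Thorin interpolation theorem, applied to the linear operator $\matr{M} = \abs{\matr{A}_{\mathcal{E}}\matr{B}}$ (the entrywise absolute value of $\Pi = \matr{B}^\trp\matr{L}^+\matr{B}$). The first step is to recall the reduction, already sketched in the introduction: for unit-capacity graphs the competitive ratio $\rho_p(\matr{A}_{\mathcal{E}})$ for multi-commodity $\ell_p$-routing equals $\norm{\abs{\matr{A}_{\mathcal{E}}\matr{B}}}_{p \to p}$, i.e. the $\ell_p \to \ell_p$ operator norm of $\matr{M}$. (I would state this as a short lemma or cite wherever it is proved in the body of the paper; the point is that testing the oblivious routing against single source-sink demand pairs along edges yields exactly the matrix $\matr{A}_{\mathcal{E}}\matr{B}$, and the multi-commodity objective $\norm{\sum_i \abs{\matr{f}_i}}_p$ against the worst-case optimum reduces to the $p \to p$ norm of its entrywise absolute value.)

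Next I would invoke Theorem \ref{thm:mainLinf}, which gives $\norm{\matr{M}}_{1 \to 1} = \rho_1(\matr{A}_{\mathcal{E}}) \leq 3\log(2m)/\Phi$ and $\norm{\matr{M}}_{\infty \to \infty} = \rho_\infty(\matr{A}_{\mathcal{E}}) \leq 3\log(2m)/\Phi$. The Riesz--Thorin theorem states that if a linear operator $\matr{M}$ is bounded $L^{p_0} \to L^{p_0}$ with norm $C_0$ and $L^{p_1} \to L^{p_1}$ with norm $C_1$, then for every $p$ with $1/p = (1-\theta)/p_0 + \theta/p_1$, $\theta \in [0,1]$, it is bounded $L^p \to L^p$ with norm at most $C_0^{1-\theta} C_1^{\theta}$. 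Taking $p_0 = 1$, $p_1 = \infty$, every $p \in [1,\infty]$ is realized by some $\theta$, and since both endpoint norms are bounded by the \emph{same} quantity $3\log(2m)/\Phi$, the interpolated bound is $(3\log(2m)/\Phi)^{1-\theta}(3\log(2m)/\Phi)^{\theta} = 3\log(2m)/\Phi$, independent of $p$. This immediately yields $\rho_p(\matr{A}_{\mathcal{E}}) = \norm{\matr{M}}_{p\to p} \leq 3\log(2m)/\Phi$ for all $p \in [1,\infty]$.

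The only genuine subtlety — and the step I would be most careful about — is the legitimacy of applying Riesz--Thorin on the \emph{entrywise-absolute-value} matrix $\matr{M}$ rather than on $\Pi$ itself: Riesz--Thorin applies to any fixed linear operator on a measure space (here the finite counting-measure space on $E$), and $\matr{M}$, while not a projection, is still an honest linear map $\R^E \to \R^E$, so the theorem applies verbatim. One should also note that the cases $p \in \{1, \infty\}$ are not covered by the open-interval version of some statements of Riesz--Thorin, but those are exactly the endpoints supplied directly by Theorem \ref{thm:mainLinf}, so there is nothing to prove there. Finally, I would remark that the reduction from weighted to unit-capacity graphs (mentioned in the introduction) lets us drop the unit-capacity assumption, so the statement holds for $\Phi$-expander multigraphs in general. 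This completes the proof.
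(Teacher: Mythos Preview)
Your proposal is correct and matches the paper's own proof essentially line for line: the paper states the identity $\rho_p(\matr{A}) = \norm{\abs{\matr{A}\matr{B}}}_{p\to p}$ as a lemma, invokes the Riesz--Thorin bound $\norm{\matr{M}}_{p\to p} \leq \norm{\matr{M}}_{1\to 1}^{1/p}\norm{\matr{M}}_{\infty\to\infty}^{1-1/p}$ for nonnegative matrices, and plugs in Theorem~\ref{thm:mainLinf} at both endpoints. Your closing remark about dropping the unit-capacity assumption should be omitted, however: the paper explicitly notes that the $\ell_p$ interpolation argument relies on the symmetry of $\Pi$ and therefore holds only in the unweighted setting, which is exactly what Theorem~\ref{thm:mainLp} asserts.
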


It was proven in \cite{SRS18}, that $\Pi = \matr{B}^{\trp}\matr{L}^+\matr{B}$ satisfies $\norm{\abs{\Pi}}_{2 \to 2} \leq O(\log^2 n)$ (where $\abs{\cdot}$ indicates entry-wise absolute value).
We refer to this quantity $\norm{\abs{\Pi}}_{2 \to 2}$ as \emph{$\ell_2$-localization}.
It is widely believed that $\norm{\abs{\Pi}}_{2 \to 2} \leq O(\log m)$.
Implicit in earlier works, albeit perhaps not widely observed, is that $\rho_2(\matr{A}_{\mathcal{E}}) = \norm{\abs{\Pi}}_{2 \to 2}$ (see Lemma \ref{lemma_competitive_ratio_as_norm}), i.e. the competitive ratio of multi-commodity $\ell_2$ routing is exactly characterized by $\ell_2$-localization.
By interpolating with this norm bound and our bounds on $\rho_1(\matr{A}_{\mathcal{E}})$ and $\rho_{\infty}(\matr{A}_{\mathcal{E}})$, we obtain potentially much stonger bounds on the competitive ratio $\rho_p(\matr{A}_{\mathcal{E}})$.

\begin{corollary}[implied by Riesz-Thorin]
    \label{thm:mainUpperBoundViaLocalization}
For a \(\Phi\)-expander multigraph \(G = (V, E) \), and any $p \in [2,\infty]$ and $q$ given by $1/p + 1/q = 1$, we have that the competitive ratios of $\matr{A}_\mathcal{E}$ for the $\ell_p$ and $\ell_q$ norms are
\[
\rho_{p}(\matr{A}_{\mathcal{E}}), \rho_{q}(\matr{A}_{\mathcal{E}})
\leq \|\abs{\Pi} \|^{2/p}_{2 \to 2} \left( 3 \cdot \frac{ \log(2m)}{\Phi}\right)^{1-2/p}.
\]
\end{corollary}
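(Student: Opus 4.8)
The plan is to obtain this corollary purely by interpolation, combining the operator-norm characterisation of the competitive ratio with the endpoint bounds already established. Concretely, I would first invoke Lemma~\ref{lemma_competitive_ratio_as_norm}, which identifies $\rho_r(\matr{A}_{\mathcal{E}}) = \norm{\abs{\Pi}}_{r \to r}$ for every $r \in [1,\infty]$, where $\Pi = \matr{B}^{\trp}\matr{L}^+\matr{B}$ and $\abs{\cdot}$ denotes the entrywise absolute value. Since $\matr{L}^+$ is symmetric, $\Pi$ and hence the nonnegative matrix $\abs{\Pi}$ are symmetric, so the standard duality $\norm{\matr{X}}_{r \to r} = \norm{\matr{X}^{\trp}}_{s \to s}$ (for $1/r + 1/s = 1$) gives $\rho_p(\matr{A}_{\mathcal{E}}) = \rho_q(\matr{A}_{\mathcal{E}})$. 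Thus it suffices to bound $\norm{\abs{\Pi}}_{p \to p}$ for $p \in [2,\infty]$.

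Next I would apply the Riesz--Thorin interpolation theorem to the fixed nonnegative matrix $\abs{\Pi}$, viewed as an operator on finite-dimensional $\ell_r$-spaces (over $E$, resp. $V$, with counting measure), taking the two endpoints $r_0 = 2$ and $r_1 = \infty$. Writing $1/p = (1-\theta)/2 + \theta/\infty = (1-\theta)/2$, we get $\theta = 1 - 2/p$, and Riesz--Thorin yields
\[
\norm{\abs{\Pi}}_{p \to p} \;\leq\; \norm{\abs{\Pi}}_{2 \to 2}^{\,1-\theta}\; \norm{\abs{\Pi}}_{\infty \to \infty}^{\,\theta} \;=\; \norm{\abs{\Pi}}_{2 \to 2}^{\,2/p}\; \norm{\abs{\Pi}}_{\infty \to \infty}^{\,1-2/p}.
\]
Substituting the bound $\norm{\abs{\Pi}}_{\infty \to \infty} = \rho_{\infty}(\matr{A}_{\mathcal{E}}) \leq 3\log(2m)/\Phi$ from Theorem~\ref{thm:mainLinf}, and recalling that $\norm{\abs{\Pi}}_{2 \to 2}$ is exactly the $\ell_2$-localization, gives the claimed bound for $\rho_p$; the bound for $\rho_q$ then follows from the symmetry observation above. (Equivalently, one can interpolate between the endpoints $r_0 = 1$ and $r_1 = 2$, using $\norm{\abs{\Pi}}_{1 \to 1} \leq 3\log(2m)/\Phi$ from Theorem~\ref{thm:mainLinf}, which directly produces the exponent $1 - 2/p$ on the $\log$-term and $2/p$ on the localization term.) As consistency checks, $p = \infty$ reproduces Theorem~\ref{thm:mainLinf} and $p = 2$ reproduces $\rho_2(\matr{A}_{\mathcal{E}}) = \norm{\abs{\Pi}}_{2 \to 2}$.

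The argument is short, so there is no genuine obstacle; the only point requiring a little care is to cite Riesz--Thorin in a form that applies to real, finite-dimensional $\ell_r \to \ell_r$ operator norms rather than to complex $L^r$-spaces over general measure spaces. This is standard (the finite case is a special case, and since $\abs{\Pi}$ has nonnegative real entries nothing is lost by working with real scalars), but it is worth stating explicitly; after that the only remaining thing to verify is the bookkeeping relating the interpolation parameter $\theta$ to $p$ together with the duality identity used for the $\ell_q$ statement.
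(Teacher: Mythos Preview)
Your proposal is correct and follows essentially the same route as the paper: identify $\rho_r(\matr{A}_{\mathcal{E}})=\norm{\abs{\Pi}}_{r\to r}$ via Lemma~\ref{lemma_competitive_ratio_as_norm}, use symmetry of $\Pi$ to reduce $\ell_q$ to $\ell_p$, apply Riesz--Thorin between the endpoints $2$ and $\infty$, and plug in Theorem~\ref{thm:mainLinf}. The only cosmetic difference is that the paper writes the matrix as $\abs{\matr{A}_{\mathcal{E}}\matr{B}}$ rather than $\abs{\Pi}$ (these coincide in the unweighted setting), and it packages the needed Riesz--Thorin inequality as Theorem~\ref{lemma:special_case_riesz_thorin}; one minor slip in your write-up is that $\abs{\Pi}$ acts on $\R^E$, not on $\R^V$.
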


We complement our upper bounds with strong, unconditional lower bounds. Remarkably, if $\norm{\abs{\Pi}}_{2 \to 2} \leq O(\log m)$, as widely believed, then our bounds are tight up to constants. Even with the currently known fact $\norm{\abs{\Pi}}_{2 \to 2} \leq O(\log^2 n)$, our lower bounds still prove a sublogarithmic gap in every constant $p \neq 2$ and $q$ and optimal $\Phi$ dependency. 

\begin{theorem}
\label{thm:mainLowerBound}
For an infinite number of positive integers $n$ and any $\Phi \in [1/\sqrt[3]{n}, 1]$, for any $p \in [2,\infty]$ and $q$ given by $1/p + 1/q = 1$, we have that
\[
\rho_{p}(\matr{A}_{\mathcal{E}}), \rho_{q}(\matr{A}_{\mathcal{E}})
\geq \Omega\left( \frac{\log m}{\Phi^{1-2/p}}\right).
\]
\end{theorem}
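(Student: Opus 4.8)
The plan is to build an explicit family of expanders together with explicit demands witnessing the lower bound, and then lift the $p=\infty$ (equivalently $p=1$) case to all $p \in [2,\infty]$ by a "same witness" argument rather than by interpolation (interpolation only gives upper bounds, so it is useless here).

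\textbf{Step 1: The core gadget for $p=\infty$.}
First I would construct, for infinitely many $n$, a $\Phi$-expander on $\Theta(n)$ vertices and $m = \Theta(n)$ edges on which $\norm{\abs{\Pi}}_{\infty \to \infty} = \Omega(\Phi^{-1}\log m)$, where $\Pi = \matr{B}^\trp \matr{L}^+ \matr{B}$. Since $\norm{\abs{\Pi}}_{\infty\to\infty} = \norm{\abs{\matr{A}_{\mathcal E}\matr{B}}}_{\infty\to\infty} = \rho_\infty(\matr{A}_{\mathcal E})$ (and equals $\rho_1$), a single gadget handles both endpoints $p=\infty$ and $q=1$. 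The natural candidate is a graph that is locally tree-like — e.g. take a constant-degree expander and attach a short binary-tree-like structure, or more simply take the union of a good expander $H$ with a long path / a structure forcing a demand between two "far-ish" vertices $s,t$ at effective resistance $\Theta(\Phi^{-1})$ whose electrical flow spreads over $\Theta(\log m)$ "levels", each contributing additively in absolute value. Concretely: place a single source-sink demand $\matr{\chi} = \matr{e}_s - \matr{e}_t$; the row of $\abs{\matr{A}_{\mathcal E}\matr{B}}$ indexed by the edge $(s,t)$ applied to this demand equals $\sum_{e}\abs{\Pi_{(s,t),e}}$, i.e. the $\ell_1$ mass of the electrical flow routing $\matr{e}_s-\matr{e}_t$. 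On an expander, a well-chosen $s,t$ pair at resistance distance $\Theta(1/\Phi)$ forces this flow to split into geometrically-decaying shells around $s$ and $t$; summing $\abs{\cdot}$ over $\Theta(\log m)$ shells before the flow "mixes" gives the extra $\log m$ factor and the $1/\Phi$ factor. I would lean on the Schild–Rao–Srivastava / Kelner–Maymounkov style estimates (Cheeger + the local-mixing picture) to make this $\Omega(\Phi^{-1}\log m)$ precise, and I would verify the $\Phi$-expansion of the gadget directly, using the constraint $\Phi \in [1/\sqrt[3]{n},1]$ to guarantee enough room ($\Theta(\log m)$ shells of geometrically growing volume fit inside $n$ vertices precisely when $\Phi \gtrsim n^{-1/3}$, or some such polynomial slack).

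\textbf{Step 2: From $p=\infty$ to general $p$.}
Take the same graph $G$ and the same structured demand, but now use it as a \emph{multi-commodity} instance replicated appropriately. The key identity (Lemma \ref{lemma_competitive_ratio_as_norm}) is that $\rho_p(\matr{A}_{\mathcal E})$ equals an operator-norm-type quantity; for the lower bound I only need \emph{one} instance achieving the ratio. I would exhibit demands $\{\matr{\chi}_i\}$ obtained by "spreading" the bad $\ell_\infty$ instance across $\Theta(1/\Phi^{2/p})$ parallel copies inside the same expander (or across carefully chosen disjoint-ish sub-demands), so that the oblivious objective $\norm{\sum_i \abs{\matr{A}_{\mathcal E}\matr{\chi}_i}}_p$ inherits the full $\log m$ blow-up while the offline optimum $\max_i \norm{\matr{f}^*_i}_p$-type quantity is controlled — the $\Phi^{-(1-2/p)}$ scaling emerges because the number of "active" coordinates that simultaneously carry mass scales like $\Phi^{-1}$, and raising to the $\ell_p$ norm turns a factor-$\Phi^{-1}$ coordinate count into $\Phi^{-1/p}\cdot\Phi^{-1/q} \cdot (\text{stuff})$, combining to $\Phi^{-(1-2/p)}$. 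A cleaner route, which I would try first, is: prove the bound for $p=\infty$ and $p=2$ separately (for $p=2$ one needs a graph with $\norm{\abs{\Pi}}_{2\to2} = \Omega(\log m)$, which is the known localization lower bound — cite it, ensuring the example is also a $\Phi$-expander or can be padded to one), and then note that an \emph{interpolation-type lower bound} holds because of the special structure of the extremal examples: on the gadget the matrix $\abs{\Pi}$ restricted to the relevant rows/columns behaves like a scaled all-ones-ish block of dimensions governed by $\Phi$, for which $\norm{\cdot}_{p\to p}$ is computed exactly and scales as claimed. The $q$ case follows from $\norm{\abs{\Pi}}_{p\to p} = \norm{\abs{\Pi}}_{q\to q}$ since $\Pi = \Pi^\trp$ and $\abs{\Pi}$ is symmetric (the same transpose argument already used in the excerpt for $p=1,\infty$).

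\textbf{Main obstacle.}
The hard part is Step 1 done \emph{quantitatively and unconditionally}: constructing a genuine $\Phi$-expander (not merely a near-expander) for the \emph{full} range $\Phi \in [n^{-1/3},1]$ on which the electrical flow of a single demand provably has $\ell_1$-length $\Omega(\Phi^{-1}\log m)$, with all constants explicit. Proving matching \emph{lower} bounds on $\norm{\abs{\Pi}}$ is typically much more delicate than upper bounds because one must show the electrical flow really does fail to localize — i.e., exhibit a demand whose flow stubbornly spreads over logarithmically many scales — while simultaneously certifying good expansion, and these two requirements pull in opposite directions. I expect the construction to be a "expander-of-expanders" or a boosted hypercube-like graph where the recursive structure both guarantees $\Phi$-expansion at every level and forces the $\log m$ levels of flow dispersion; reconciling the $\Phi$ dependence across the recursion, and checking the edge count is $\Theta(n)$ so that $\log m = \Theta(\log n)$, is where the real work lies.
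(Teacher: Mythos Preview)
Your outline has the right shape --- construct a gadget, exhibit witness demands, transfer to $q$ via the symmetry of $\Pi$ --- but both concrete mechanisms you propose miss the ideas that actually make the argument go through.

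For Step~1, the $\log m$ factor in the paper does not come from a resistance-$\Theta(1/\Phi)$ pair and a shell decomposition. The paper starts from a \emph{constant}-degree, \emph{constant}-expansion graph $G_{\Delta,n}$ with \emph{girth} $\Omega(\log n)$ (the explicit high-girth expanders of Theorem~\ref{thm:constructDetExpander}). On such a graph the electrical flow across any single edge $e$ already has $\ell_1$-norm $\Omega(\log n)$: once $\Delta$ is a large enough constant, removing $e$ still leaves an $\Omega(1)$-expander, so a constant fraction of the flow bypasses $e$, and every bypass path has length at least the girth. Girth is what manufactures the logarithm; your shell heuristic does not obviously force it and would need a separate argument. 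The $\Phi$-dependence is then injected by a second gadget rather than by choosing far-apart endpoints: replace every edge of $G_{\Delta,n}$ by $1/\Phi$ vertex-disjoint paths of length $1/\Phi$ to obtain $G^{\Phi}_{\Delta,n}$, and take the \emph{union} $G = G_{\Delta,n} \cup G^{\Phi}_{\Delta,n}$. This union is a $\Theta(\Phi)$-expander with $\Theta(n/\Phi^2)$ edges, and the effective resistance across each original edge remains $\Theta(1)$, so a constant fraction of every electrical commodity still lands inside the copy of $G_{\Delta,n}$.

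For Step~2, neither route you sketch matches the paper, and both are left where the real difficulty sits: your back-of-the-envelope for the $\Phi$-exponent does not close without the unspecified ``stuff'', and a ``structured block'' lower-bound interpolation would require knowing $\abs{\Pi}$ on the gadget far more precisely than one does. The paper instead uses a \emph{single} multi-commodity instance on $G$ that works for every $p \ge 2$ simultaneously, with no $p$-dependent tuning: for each edge $(u,v)$ of $G_{\Delta,n}$, send $1/\Phi$ units from $u$ to $v$. The electrical routing then loads each of the $\Theta(n)$ edges of $G_{\Delta,n}$ with $\Omega((\log n)/\Phi)$ total absolute flow, so $\norm{\sum_i \abs{\matr{A}_{\mathcal E}\matr{\chi}_i}}_p \ge \Omega\bigl(n^{1/p}(\log n)/\Phi\bigr)$. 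The optimal routing uses the $1/\Phi$ disjoint paths in $G^{\Phi}_{\Delta,n}$ with congestion $1$ on $\Theta(n/\Phi^2)$ edges, giving $\opt_p = \Theta\bigl((n/\Phi^2)^{1/p}\bigr)$. Dividing yields $\Omega\bigl((\log n)\,\Phi^{2/p-1}\bigr)$ directly: the exponent $1-2/p$ arises purely from contrasting $\Theta(n)$ heavily-loaded shortcut edges against $\Theta(n/\Phi^2)$ lightly-loaded path edges. The range $\Phi \ge n'^{-1/3}$ is an artifact of the construction --- the final graph has $n' = \Theta(n/\Phi^2)$ vertices, so sweeping the construction parameter over $[1/n,1]$ realizes conductances in $[n'^{-1/3},1]$ --- not a shell-fitting constraint.
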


\begin{figure}[h]
\centering
\includegraphics[width=.85\textwidth]{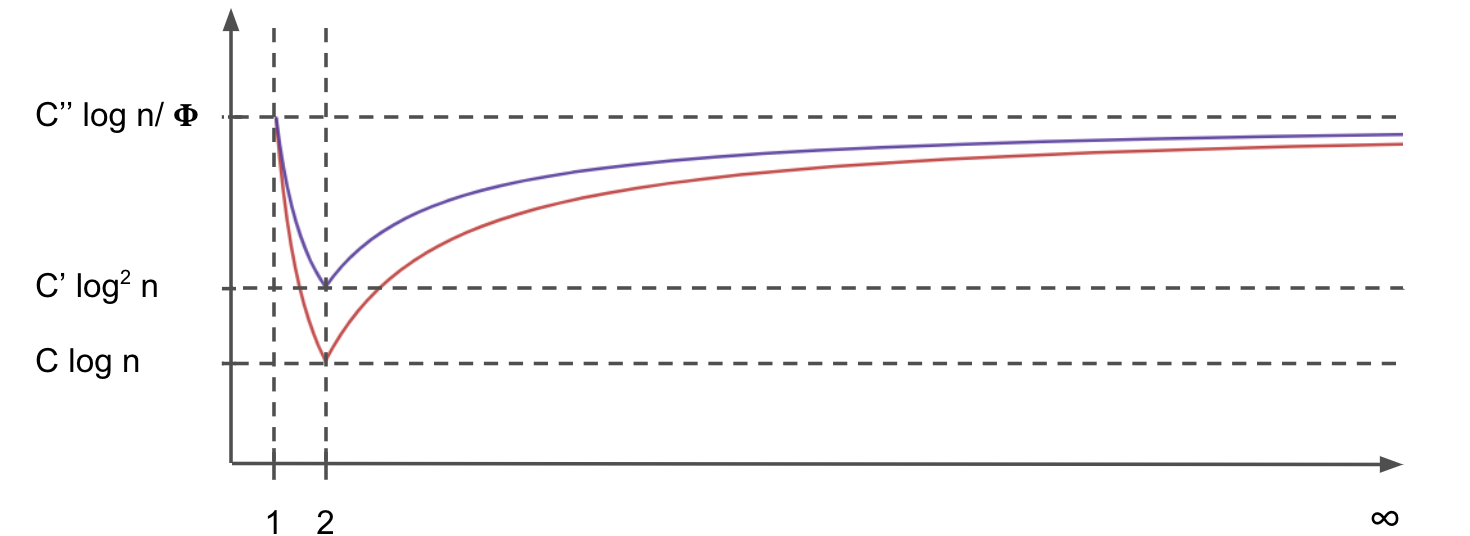}
\caption{ { An illustration of the result given in Corollary \ref{thm:mainUpperBoundViaLocalization} of the different competitive ratios achieved with respect to each $\ell_p$-norm, where $n$ and $\Phi$ are fixed and $\Phi \ll 1/\log m$. The red curve shows the optimal ratios, achieved if localization is in $O(\log m)$, which is also obtained up to constant factors by our lower bound in Theorem \ref{thm:mainLowerBound}. The lilac curve shows the current trade-off where we use the known result that localization is in $O(\log^2 n)$. For $\Phi \gg 1/\log m$ the upper bound of Theorem \ref{thm:mainLp} achieves values between the two curves.}}
\end{figure}

\subsection{Oblivious Electric Routing in Weighted Graphs} 

We can extend Theorem~\ref{thm:mainLinf} to weighted graphs in the
following way: Consider a graph $G = (V,E)$ with positive integer edge
capacities $\matr{c} \in \R^E$ and  positive integer edge lengths
$\matr{s} \in \R^E$.
Letting $\matr{C},\matr{S}  \in \R^{E \times E}$ denote diagonal
matrices with $\matr{c}$ and $\matr{s}$ on the diagonal respectively,
we are interested in the 
optimal weighted $\ell_{\infty}$- and $\ell_{1}$-routings for given
demands
\(D = \{\matr{\chi}_1, \ldots, \matr{\chi}_k\}\)
\begin{equation}
  \opt_{\infty}(D) = \min_{\matr{B}\matr{f}_i = \matr{\chi}_i,
    \forall i} \norm{\matr{C}^{-1} \sum_i \abs{\matr{f}_i}}_{\infty}
  \text{ and }
  \opt_{1}(D) = \min_{\matr{B}\matr{f}_i = \matr{\chi}_i, \forall i}
  \norm{\matr{S} \sum_i \abs{\matr{f}_i}}_{1}
  \label{eq:mcflow}
\end{equation}

Now consider defining an electrical routing by choosing resistances
  $\matr{R} = \matr{S}\matr{C}^{-1}$ and defining the electrical
routing
$\matr{A}_{\mathcal{E}} = \matr{R}^{-1} \matr{B}^{\trp}
(\matr{B}\matr{R}^{-1}\matr{B}^{\trp})^+$.
Let $\hat{G}$ denote the multigraph with edge $e$ replaced by
$\matr{c}(e)$ unit-weight paths of length $\matr{s}(e)$.
Now, one can easily show that the electrical routing in $G$ according
to $\matr{A}_{\mathcal{E}}$ is equal to the unit-weight electrical
routing in $\hat{G}$, when mapping flows on a capacitated edge to a
collection of flows on multi-edge paths.

\begin{corollary}[(Informal) Electrical Oblivious $\ell_{\infty}$- and
  $\ell_1$-Routing on Weighted Expanders]
  \label{cor:informaloblp}
  For (multi-)graph \(G = (V, E) \)
  with edge-vertex incidence matrix $\matr{B}$ and
positive integer edge weights
  and lengths given as diagonal matrices $\matr{C},\matr{S}  \in \R^{E \times E}$,
  the electrical routing $\matr{A}_{\mathcal{E}} = \matr{R}^{-1} \matr{B}^{\trp}
  (\matr{B}\matr{R}^{-1}\matr{B}^{\trp})^+$, where $\matr{R} =
  \matr{S}\matr{C}^{-1}$, 
   has competitive ratios $\rho_\infty$ and $\rho_1$ for multicommodity
   $\ell_{\infty}$ and $\ell_{1}$ routing both bounded by
  \[
    \rho_{\infty}(\matr{A}_{\mathcal{E}}),
    \rho_{1}(\matr{A}_{\mathcal{E}})
    \leq 3 \cdot \frac{\ln(2|E_{\hat{G}}|)}{\Phi(\hat{G})}
  \]
   where $\hat{G}$ denotes the multigraph with edge $e$ replaced by a path of length $\matr{S}(e,e)$ with 
$\matr{C}(e,e)$ unit-weight multi-edges across each hop of the path. 
\end{corollary}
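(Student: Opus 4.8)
The plan is to reduce the weighted claim to the unit-weight Theorem~\ref{thm:mainLinf} applied verbatim to the multigraph $\hat G$, by checking that \emph{both} the electrical routing and the multi-commodity $\ell_\infty$/$\ell_1$ objectives are preserved under the ``unfolding'' $G \mapsto \hat G$. Recall $\hat G$ replaces each edge $e=(u,v)$ of $G$ by a path through $\matr{s}(e)-1$ fresh internal vertices $u=w_0,w_1,\dots,w_{\matr{s}(e)}=v$, where each ``hop'' $\{w_{j-1},w_j\}$ is a bundle of $\matr{c}(e)$ parallel unit edges; thus $V \subseteq V_{\hat G}$ and $|E_{\hat G}| = \sum_e \matr{s}(e)\matr{c}(e)$. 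I will use a \emph{lift} map, sending a flow $\matr{f}$ on $G$ to the flow on $\hat G$ that routes $\matr{f}(e)$ along the path replacing $e$, split evenly (amount $\matr{f}(e)/\matr{c}(e)$) on each of the $\matr{c}(e)$ parallel edges of every hop, and a \emph{projection} map, sending a flow $\hat{\matr{f}}$ on $\hat G$ whose demand is supported on $V$ to the flow on $G$ assigning to $e$ the net flow of $\hat{\matr f}$ across any hop of the path replacing $e$. The projection is well defined because the internal vertices carry no demand, so conservation forces this net flow to be the same across all hops of that path; and both maps preserve the routed demand (viewing a demand on $V$ as a demand on $V_{\hat G}$ by padding with zeros).

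\textbf{Step 1: the electrical flow is preserved.} Let $\matr{\chi}$ be a demand on $V$ and let $\hat{\matr{f}} = \matr{A}_{\mathcal{E}}^{\hat G}\matr{\chi}$ be the unit-weight electrical flow in $\hat G$, with potentials $\hat{\matr{p}}$, so $\hat{\matr{f}}(e') = \hat{\matr{p}}(w_{j-1})-\hat{\matr{p}}(w_j)$ for every parallel edge $e'$ of hop $j$. In particular $\hat{\matr f}$ splits evenly within each bundle, and across the path replacing $e=(u,v)$ the potential drops additively so that, writing $\matr f(e)$ for the common net flow across the hops, $\hat{\matr p}(u) - \hat{\matr p}(v) = \matr{s}(e)\cdot \matr f(e)/\matr c(e) = \matr R(e,e)\,\matr f(e)$. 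Hence the projected flow satisfies $\matr f = \matr R^{-1}\matr B^{\trp}\matr p$ with $\matr p = \hat{\matr p}|_V$, which means $\matr f$ is exactly the $\matr R$-electrical flow $\matr A_{\mathcal E}\matr\chi$ in $G$ by uniqueness of electrical flows; and $\hat{\matr f} = \mathrm{lift}(\matr A_{\mathcal E}\matr\chi)$ since $\hat{\matr f}$ is even and path-constant. (Series/parallel resistor laws are exactly what produces the effective resistance $\matr R(e,e)=\matr s(e)/\matr c(e)$ here.)

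\textbf{Step 2: the objectives are preserved, and the conclusion.} For any demand collection $D=\{\matr\chi_i\}$ supported on $V$ I claim $\opt_\infty(D) = \opt_\infty^{\hat G}(D)$ and $\opt_1(D) = \opt_1^{\hat G}(D)$, where the right-hand sides are the unit-weight objectives of~\eqref{equation_multicomm} in $\hat G$. The inequality ``$\geq$'' follows by lifting optimal $G$-flows: on a unit edge $e'$ of the path replacing $e$ one has $\sum_i|\hat{\matr f}_i(e')| = \tfrac1{\matr c(e)}\sum_i|\matr f_i(e)|$, which contributes $\tfrac1{\matr c(e)}\sum_i|\matr f_i(e)|$ to the $\ell_\infty$ objective and, summed over the $\matr s(e)\matr c(e)$ unit edges of the path, exactly $\matr s(e)\sum_i|\matr f_i(e)|$ to the $\ell_1$ objective, matching $\opt_\infty(D)$ and $\opt_1(D)$ termwise. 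The inequality ``$\leq$'' follows by projecting optimal $\hat G$-flows and applying the triangle inequality per commodity across each hop-bundle: the net flow on $e$ is an average of at most $\matr c(e)$ per-edge flows, so its congestion is at most the maximum unit-edge congestion; and $\sum_{j=1}^{\matr s(e)}|\matr f_i(e)| \le \sum_{e'\text{ in path}}|\hat{\matr f}_i(e')|$, so summing over $i$ and $e$ bounds the $\ell_1$ objective in $G$ by that in $\hat G$. By Step 1 these same identities hold with the optimum replaced by the electrical-routing cost, since the electrical routing lifts to the electrical routing. Therefore, for every $D$ supported on $V$, the competitive ratio of $\matr A_{\mathcal E}$ on $D$ in $G$ equals that of $\matr A_{\mathcal E}^{\hat G}$ on $D$ in $\hat G$; taking suprema and noting that demands supported on $V$ form a subfamily of all demands in $\hat G$, we get $\rho_\infty(\matr A_{\mathcal E}),\rho_1(\matr A_{\mathcal E}) \le \rho_\infty(\matr A_{\mathcal E}^{\hat G}),\rho_1(\matr A_{\mathcal E}^{\hat G})$, and Theorem~\ref{thm:mainLinf} applied to the $\Phi(\hat G)$-expander multigraph $\hat G$ gives the bound $3\ln(2|E_{\hat G}|)/\Phi(\hat G)$.

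\textbf{Main obstacle.} The delicate point is Step 2's ``$\leq$'' direction for \emph{multi-commodity} flows: one must check that the net-flow-on-$e$ extracted from an arbitrary, possibly highly asymmetric $\hat G$-flow is still a legitimate $G$-routing of $D$ (this is where well-definedness of the projection is used) and that the per-commodity triangle inequalities across bundles/hops assemble into the stated inequality on $\sum_i|\matr f_i|$ with no loss beyond a constant — in fact with no loss at all, which is what makes the two competitive ratios coincide rather than merely be comparable. Also worth stating cleanly, though routine, is that the unit-weight electrical flow in $\hat G$ genuinely splits symmetrically within each bundle and is constant along each path — immediate from the potential formula above (equivalently, from uniqueness together with the automorphisms of $\hat G$ permuting parallel edges) — since everything in Step 1 hinges on it.
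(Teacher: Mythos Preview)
Your proposal is correct and follows essentially the same approach as the paper: the paper only sketches this corollary, stating that ``one can easily show that the electrical routing in $G$ according to $\matr{A}_{\mathcal{E}}$ is equal to the unit-weight electrical routing in $\hat{G}$, when mapping flows on a capacitated edge to a collection of flows on multi-edge paths,'' and then applies Theorem~\ref{thm:mainLinf} to $\hat G$. You have faithfully filled in the details of that sketch---the lift/projection maps, the series--parallel potential computation in Step~1, and the termwise comparison of the $\ell_\infty$ and $\ell_1$ objectives in Step~2---all of which are exactly what the paper's one-line reduction is implicitly invoking.
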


\begin{remark}
  When we take all edge lengths $\matr{S}(e,e)$ to be $1$, the expansion of $\hat{G}$
  equals the usual definition of expansion in graph $G$ with edge
  weight equal to capacity.
\end{remark}

\subsection{New Implications for Localization}

From the connection to localization outlined earlier, we also immediately conclude that localization of graphs with $1/o(\log m)$ expansion improve over the general localization bound of $O(\log^2 m)$.

\begin{corollary}[(Informal) Localization of Electrical Flow]
   For a multigraph \(G = (V, E) \), the average over multi-edges of the $\ell_1$-norm the electrical flow routing 1 unit of flow across the multi-edge is bounded by $O\left(\min\{\Phi^{-1} \log m, \log^2n\}\right)$, and hence graphs with expansion
   $\Phi = 1/o(\log m) $ have localization $o(\log^2 m)$.
\end{corollary}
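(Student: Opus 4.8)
The plan is to derive this statement directly from Theorem~\ref{thm:mainLinf} together with the pre-existing $\ell_2$-localization bound, by observing that localization is an \emph{averaged, single-commodity} quantity and is therefore dominated by the worst-case competitive ratio $\rho_1(\matr{A}_{\mathcal{E}})$. First I would record the algebraic identity behind localization: for a multi-edge $e = (u,v) \in E$, the electrical flow routing one unit across $e$ is $\matr{A}_{\mathcal{E}}(\matr{1}_u - \matr{1}_v) = \matr{B}^{\trp}\matr{L}^+\matr{B}\,\matr{1}_e = \Pi\,\matr{1}_e$, the $e$-th column of $\Pi = \matr{B}^{\trp}\matr{L}^+\matr{B}$ (here $\matr{1}_e \in \R^E$ is the standard basis vector indexed by $e$, so that $\matr{B}\matr{1}_e = \pm(\matr{1}_u - \matr{1}_v)$). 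Consequently, writing $m = |E|$ and letting $\mathbbm{1}$ be the all-ones vector, the localization in the statement is exactly
\[
  \frac{1}{m}\sum_{e \in E}\norm{\Pi\,\matr{1}_e}_1 \;=\; \frac{1}{m}\,\mathbbm{1}^{\trp}\abs{\Pi}\,\mathbbm{1}.
\]

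For the $\Phi^{-1}\log m$ branch of the minimum, note that for each multi-edge $e = (u,v)$ the demand $\matr{\chi} = \matr{1}_u - \matr{1}_v$ has optimal $\ell_1$-value exactly $1$ (route it along $e$; conversely the net outflow at $u$ forces $\norm{\matr{f}}_1 \ge 1$ for any routing $\matr{f}$), so by the definition of the competitive ratio $\norm{\Pi\matr{1}_e}_1 = \norm{\matr{A}_{\mathcal{E}}\matr{\chi}}_1 \le \rho_1(\matr{A}_{\mathcal{E}})$. Hence the average above is at most $\rho_1(\matr{A}_{\mathcal{E}})$, which Theorem~\ref{thm:mainLinf} (equivalently Corollary~\ref{cor:informaloblp} with unit lengths and capacities) bounds by $3\log(2m)/\Phi$. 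For the $\log^2 n$ branch, I would instead apply Cauchy--Schwarz with $v = \mathbbm{1}/\sqrt{m}$ (so $\norm{v}_2 = 1$):
\[
  \frac{1}{m}\,\mathbbm{1}^{\trp}\abs{\Pi}\,\mathbbm{1} \;=\; \frac{1}{\sqrt{m}}\,\mathbbm{1}^{\trp}\abs{\Pi}\,v \;\le\; \norm{\abs{\Pi}\,v}_2 \;\le\; \norm{\abs{\Pi}}_{2 \to 2} \;=\; O(\log^2 n),
\]
the last equality being the $\ell_2$-localization bound of \cite{SRS18}. Taking the minimum of the two estimates gives the claimed $O(\min\{\Phi^{-1}\log m,\ \log^2 n\})$ bound, and the ``hence'' clause follows since $\Phi = 1/o(\log m)$ means $\Phi^{-1} = o(\log m)$, whence $\Phi^{-1}\log m = o(\log^2 m)$ and the minimum is $o(\log^2 m)$.

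I do not expect a genuine obstacle here: the content is essentially a repackaging of Theorem~\ref{thm:mainLinf} and the known $\ell_2$-localization bound. The only points that need a little care are (i) verifying that restricting $\rho_1(\matr{A}_{\mathcal{E}})$ to edge-demands loses nothing — this is exactly the observation $\opt_1(\{\matr{1}_u-\matr{1}_v\}) = 1$, equivalently the column-$\ell_1$-norm characterization $\norm{\abs{\Pi}}_{1 \to 1} = \rho_1(\matr{A}_{\mathcal{E}})$ from Lemma~\ref{lemma_competitive_ratio_as_norm} — and (ii) keeping the bookkeeping between $m = |E|$ and $n = |V|$ consistent across the two regimes, which matters only because in a genuine multigraph $m$ may be far larger than $n^2$, so one cannot casually replace $\log m$ by $\log n$ or vice versa.
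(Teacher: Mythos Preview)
Your proposal is correct and matches the paper's approach: the corollary is stated there without proof, as an immediate consequence of Theorem~\ref{thm:mainLinf} (giving the $\Phi^{-1}\log m$ branch via $\rho_1(\matr{A}_{\mathcal{E}})$) together with the known $O(\log^2 n)$ localization bound from \cite{SRS18}. Your explicit Cauchy--Schwarz step linking the averaged $\ell_1$ quantity $\tfrac{1}{m}\mathbbm{1}^{\trp}\abs{\Pi}\mathbbm{1}$ to $\norm{\abs{\Pi}}_{2\to 2}$ is a detail the paper leaves implicit, but it is exactly the intended reasoning.
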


\subsection{Roadmap}

We next give a Preliminary section to set up the necessary notation for the article. We then prove Theorem \ref{sec:mainInf} in Section \ref{sec:mainInf}. We use this result together with the Riesz-Thorin theorem to obtain Theorem \ref{thm:mainLp} and Corollary \ref{thm:mainUpperBoundViaLocalization} in Section \ref{sec:upperBoundLp}. Finally, in Section \ref{sec:lowerBound}, we give our lower bounds as stated in Theorem \ref{thm:mainLowerBound}.

\section{Preliminaries}
\label{section_definitions}

\paragraph*{General Definitions} For any \(n \in \mathbb{N}^*\), we let \([n]\) denote the set \(\{1, 2, \ldots, n\}\). We let \(\matr{1}\) denote the all ones vector and \(\matr{1}_S\) denote the vector that has ones in the positions indexed by the elements of the set \(S\) and zeros otherwise. For any \(\matr{A} \in \R^{m \times n}\), we let \(\lvert \matr{A} \rvert\) denote the matrix where the absolute value operator has been applied entrywise.

\paragraph*{Graphs} Although our results in the contribution section are for unweighted graphs, we also prove stronger statements in the article that also work on weighted graphs. Therefore, we define various notions with respect to weighted graphs. 

Given an input graph \(G = (V, E, w)\) with positive weights which we all assume to be at least $1$, we define $n = |V|$ and $m = |E|$. We assume an arbitrary underlying direction assigned to each edge of \(G\). We define the edge-vertex incidence matrix \(\matr{B} \in \R^{V \times E}\) of \(G\) as 
\[
\matr{B}(w, e) = \begin{cases}
        -1, & \text{if } e = (w, v) \\ 1, & \text{if } e = (v, w) \\ 0, & \text{otherwise}
\end{cases}.
\]
We define the Laplacian $\matr{L} = \matr{B} \matr{W} \matr{B}^\trp$ where $\matr{W}$ is  the diagonal matrix given by the weights \(w\) and denote by \(\matr{L}^+\) the pseudo-inverse of the Laplacian. We call $\Pi = \matr{B}^{\trp} \matr{L}^+ \matr{B}$ the \emph{unweighted} projection matrix of $G$.

\paragraph*{Expanders} We say $G$ is a $\Phi$-expander if $|\partial S| \geq \Phi \cdot \vol(S)$ for every $S \subseteq V, \vol(S) \leq \vol(V)/2$, where we define $|\partial S|$ to be the weight of all edges with exactly one endpoint in $S$, and $\vol(S)$ to be the sum of weighted degrees of vertices in $S$.

\paragraph*{Flows and Congestion} We say \(\matr{\chi} \in \R^V\) is a demand vector if $\matr{1}^\trp \matr{\chi} = 0$. We let \(\matr{\chi}_{(a, b)} \in \R^V\) for every \((a, b) \in E\) to be the unitary demand on the edge \((a, b)\), that is $\matr{\chi}_{(a,b)} = \matr{1}_a - \matr{1}_b$. We say a vector $\matr{f} \in \R^E$ is a flow that routes demand $\matr{\chi}$ if $\matr{B} \matr{f} = \matr{\chi}$. \label{section_definitions_routing} Given an arbitrary norm \(\lVert \cdot \rVert\) on \(\R^E\), we define the congestion of a multi-set of flows \(\{\matr{f}_1, \dots, \matr{f}_k\}\) to be:    \[ \conge\left(\{\matr{f}_1, \dots, \matr{f}_k\}\right) = \left\lVert \matr{W}^{-1} \sum_{i = 1}^k{\lvert \matr{f}_i \rvert}\right\rVert.\]

\paragraph*{Oblivious Routings} We define an oblivious routing on \(G\) to be a linear operator \(\matr{A} \in \R^{E \times V}\) such that \(\matr{B} \matr{A} \matr{\chi} = \matr{\chi}\) for all demand vectors \(\matr{\chi} \in \R^V\), i.e. to be a flow that routes the demand $\matr{\chi}$.

Given a multiset of demands \(D = \{\matr{\chi}_1, \ldots, \matr{\chi}_k\}\), we define  the optimal congestion achievable by 
\[\opt(D) = \min_{\{\matr{f}_i\}_{i \in [k]} \text{ multiset : }\matr{B}\matr{f}_i = \matr{\chi}_i, \forall i}{\conge(\{\matr{f}_i\}_{i \in [k]}}).\]

This allows us to define the competitive ratio of an oblivious routing, which we define
\[\rho(\matr{A}) = \max_{\{\matr{\chi}_i\}_{i \in [k]} \text{ multiset : }\matr{\chi}_i \perp \matr{1}, \forall i}{\frac{\conge\left( \{\matr{A} \matr{\chi}_i\}_{i \in [k]} \right)}{\opt \left(\{\matr{\chi}_i\}_{i \in [k]} \right)}}.\]
Note that whenever we use the subscript ``$p$'' for the competitive ratio $\rho$, we mean that the norm used in defining the congestion in that special case is the $\ell_p$-norm.

\paragraph*{Electrical Flows and Voltages} In this article, we define the electric flow routing operator \(\matr{A}_\mathcal{E} = \matr{W} \matr{B}^\trp \matr{L}^+\).  Right-applying the operator $\matr{A}$ to any demand $\matr{\chi}$ yields the electric flow $\matr{f} = \matr{A} \matr{\chi}$ that routes the demand $\matr{\chi}$. We define the electrical energy associated with the flow vector $\matr{f}$ by $\mathcal{E}(\matr{f}) = \matr{f}^\trp \matr{W}^{-1} \matr{f}$. 

We define the electric voltage vector \(\matr{v} \in \R^V\) with respect to a demand $\matr{\chi}$ by $\matr{v} = \matr{L}^+ \matr{\chi}$. We define the electrical energy associated with the voltage vector \(\matr{v}\) as \(\mathcal{E}(\matr{v}) = \matr{v}^\trp \matr{L} \matr{v}\). Note in particular that the energy of voltages induced by a certain demand coincides with the energy of the respective flow.

We introduce the notion of \textit{``fractional'' volume} at given a threshold \(t \in \R\) with respect to a given voltage vector $\matr{v} \in \mathbb{R}^V$.
We first define the fractional volume per edge and then for the whole graph.
For an edge \((a, b) \in E\): \[\vol_{\geq t}(a, b) =
\begin{cases}
  2 \cdot w(a, b), & \text{if}\ \matr{v}(a) > t \\
  2 \cdot w(a, b) \cdot \frac{\matr{v}(b) - t}{\matr{v}(b) - \matr{v}(a)}, & \text{if}\ \matr{v}(b) \leq t \\
  0, & \text{otherwise}
\end{cases}. \]
For the whole graph \(G\):
\begin{align*}
    \vol_\geq(t) &= \sum_{(a, b) \in E}{\vol_{\geq t}(a, b)}, \\
    \vol_\geq^+(t) &= \vol_\geq(t) + 1.
\end{align*}
Note that in our notation we omit specifying which voltage vector the ``fractional'' volume function is tied to, as it will be clearly specified upon usage during the proofs.

We define \(S_t = \{a \in V \, | \, v(a) \geq t \}\) as the set of vertices whose voltages are greater or equal to the arbitrary threshold \(t \in \R\) and the cut determined by the voltage threshold \(t \in \R\) to be \(C_t = (S_t, V \setminus S_t)\). For convenience, we let the weight of the cut \(C_t\) be \(\delta(t) = \lvert \partial C_t \rvert = \sum_{e \in C_t}{w(e)}\).

\section{An Upper Bound on the Competitive Ratio of Electrical Flow Routing for $\ell_{\infty}$}
\label{sec:mainInf}

In this section, we prove our main technical result, Theorem \ref{thm:mainLinf}, by establishing a tight upper bound on the competitive ratio when the congestion is defined in terms of $\ell_{\infty}$ which then by the symmetry of $\Pi$ immediately gives the same competitive ratio in $\ell_1$. 

However, while Theorem \ref{thm:mainLinf} only claims a result for unweighted (multi-)graphs, we show in this section that $\matr{A}_{\mathcal{E}}$ even has good competitive ratio in weighted graphs with respect to $\ell_{\infty}$. However, in the weighted setting, we cannot use the bound for $\ell_{\infty}$ to derive a bound on $\ell_1$, as the matrix-norms that exactly characterize the competitive ratio are not equal in general. Nonetheless, the ``multi-graph'' trick from Corollary \ref{cor:informaloblp} can be used to transform the weighted setting into an unweighted instance and derive bounds.

\paragraph*{Intuition for our Proof}
Kelner and Maymounkov showed that in order to bound the congestion of the electrical routing, it suffices, via a duality (or transposition) argument, to bound the worst case $\ell_1$-norm of the flow induced by routing 1 unit of flow electrically across any edge. We adopt the same approach, but give a more precise analysis.

Suppose $e = (x,y)$ is the edge such that routing one unit of flow between the endpoints causes the highest overall congestion. We let $\matr{v}$ be the associated voltage vector that induces the electrical flow routing one unit from $x$ to $y$. The overall congestion then equals $\sum_{(a,b) \in E} w(a,b) | \matr{v}(a) - \matr{v}(b) |$. We can express this by integrating with respect to voltage along a voltage threshold cut with respect to $\matr{v}$, where the function being integrated at point $t$ is exactly $Y_t = \sum_{(a,b) \in C_t} w(a,b)$, where $C_t$ is the cut at voltage threshold $t$.
This ensures that after integrating $Y_t$ over the entire voltage range, each edge $(a,b)$ contributes exactly $w(a,b) | \matr{v}(a) - \matr{v}(b) |$, as desired. Our  proof proceeds by leveraging that the flow crossing the cut $C_t$ at threshold $t$ is exactly $\sum_{(a,b) \in C_t} w(a,b) | \matr{v}(a) - \matr{v}(b) |$.

As we are sending one unit of flow from $x$ to $y$, and all electrical flow goes one way across a voltage cut, this quantity is exactly 1. At each threshold $t$, this creates an ``on average'' relationship between voltage difference $| \matr{v}(a) - \matr{v}(b) |$ and weight $w(a,b)$ for edges being cut. This in turn allows us to establish a pointwise relationship at each threshold voltage $t$ between the growth in congestion and the change in volume at $t$. Armed with this relationship, we can bound the accumulated congestion of the integrated cuts in terms of the accumulated volume, and this yields our result.

\paragraph*{Contrast with the Kelner-Maymounkov Proof}
It is instructive to consider why the Kelner-Maymounkov congestion bound loses an additional factor $\Phi$ compared to our bound. For concreteness, consider the graph given by a direct edge from $x$ to $y$ and an additional $k$ disjoint paths of length $k$ from $x$ to $y$. It can be shown that in this example, the edge that governs the congestion bound in the strategy above is in fact the direct $(x,y)$ edge.

Kelner-Maymounkov upper bound the true competitive ratio $\rho' = \sum_{(a,b) \in E} w(a,b) | \matr{v}(a) - \matr{v}(b) |$ by the quantity $\rho'' = \sum_{a \in V} \matr{d}(a) |\matr{v}(a) - c|$ for some constant $c$ (see Equation (4.3) in \cite{KM11}). On this concrete graph, $\rho'$ can be explicitly evaluated and is $\Theta(k)$. As the graph has expansion $1 / k$, we can think of this as a bound of $\Theta(1 / \Phi)$. But, $\rho''$ is $\Theta(k^2)$ i.e. $\Theta(1 / \Phi^2)$.

However, Kelner and Maymoukov's strategy makes it difficult to directly bound $\rho'$ as they first measure changes in volume over a (discrete) sequence of threshold cuts, and then changes in voltage over the same sequence of cuts. Their discrete sequence of cuts skips entirely over some edges, i.e. there will be edges that are not crossing any of their cuts. This makes it difficult to establish an estimate for each edge of the pointwise relation between its contribution to volume growth versus voltage growth or congestion growth. Hence, they work with summed bounds on voltage and compare these with summed bounds on volume, which naturally yields bounds on $\rho''$. But, as we have seen, a bound on $\rho''$ must inherently be loose as there is a gap between $\rho'$ and $\rho''$.
\begin{theorem}[$\ell_{\infty}$ Competitive bound of electrical flows]
\label{lemma_improved_competitive_bound}
    For a weighted \(\Phi\)-expander multigraph \(G = (V, E, w) \), the following holds:
    \begin{align*}
    \rho_{\infty}(\matr{A}_\mathcal{E}) \leq 3 \cdot \frac{\ln(\vol(V))}{\Phi}.
    \end{align*}
\end{theorem}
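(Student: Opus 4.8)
The plan is to follow the Kelner--Maymounkov transposition idea but to replace their two-sided volume-versus-voltage comparison by one pointwise differential inequality at each voltage threshold. First I would reduce $\rho_\infty(\matr A_\mathcal{E})$ to a one-edge quantity: given demands $\{\matr\chi_i\}$ with $\opt(\{\matr\chi_i\}) \le 1$ in the $\ell_\infty$ sense, pick flows $\matr g_i$ routing $\matr\chi_i$ with $\sum_i\abs{\matr g_i}\le\matr w$ entrywise, write $\matr\chi_i = \sum_e (\matr g_i)_e \matr B_e$ with $\matr B_e$ the $e$-th column of $\matr B$, so that $\matr A_\mathcal{E}\matr\chi_i = \sum_e (\matr g_i)_e \hat{\matr f}^{(e)}$ where $\hat{\matr f}^{(e)} = \matr A_\mathcal{E}\matr B_e$ is the electrical flow sending one unit across $e$; the triangle inequality together with $\sum_i\abs{(\matr g_i)_e}\le w(e)$ then gives $\conge(\{\matr A_\mathcal{E}\matr\chi_i\}) \le \max_{e'}\tfrac{1}{w(e')}\sum_e w(e)\abs{\hat{\matr f}^{(e)}(e')}$. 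Symmetry of $\Pi = \matr B^\trp\matr L^+\matr B$ yields $\abs{\hat{\matr f}^{(e)}(e')}/w(e') = \abs{\hat{\matr f}^{(e')}(e)}/w(e)$, which collapses the bound to $\max_{e'}\norm{\hat{\matr f}^{(e')}}_1$. So it suffices to prove: for every edge $e'=(x,y)$, with voltages $\matr v = \matr L^+\matr\chi_{(x,y)}$, the flow length $\sum_{(a,b)\in E} w(a,b)\abs{\matr v(a)-\matr v(b)}$ is at most $3\ln(\vol(V))/\Phi$.

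\textbf{Co-area rewriting.} Next I would turn the flow length into a cut integral. By the discrete maximum principle $\matr v$ attains its extreme values at $x$ and $y$, so $\matr v(a)\in[\matr v(y),\matr v(x)]$ for all $a$, and $\matr v(x)>\matr v(y)$ since one unit of flow must move; since an edge $(a,b)$ lies in the threshold cut $C_t$ exactly for $t$ strictly between $\matr v(a)$ and $\matr v(b)$, we obtain $\sum_{(a,b)}w(a,b)\abs{\matr v(a)-\matr v(b)} = \int_{\matr v(y)}^{\matr v(x)}\delta(t)\,dt$, and the task becomes bounding this integral.

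\textbf{The pointwise inequality.} The heart of the argument is to prove $\delta(t) \le -\tfrac{3}{\Phi}\cdot\tfrac{d}{dt}\ln\vol_{\geq}^{+}(t)$ for almost every $t$ in the range, with the fractional volume taken with respect to $\matr v$. Two ingredients combine. (i) Because electrical flow crosses $C_t$ only from the high- to the low-voltage side, the total flow across $C_t$ equals the net flow, which is $1$ whenever $C_t$ separates $x$ from $y$; that is, $\sum_{(a,b)\in C_t}w(a,b)\abs{\matr v(a)-\matr v(b)} = 1$. Pairing this with Cauchy--Schwarz gives $\delta(t)^2 = \bigl(\sum_{(a,b)\in C_t}w(a,b)\bigr)^2 \le \sum_{(a,b)\in C_t}\tfrac{w(a,b)}{\abs{\matr v(a)-\matr v(b)}}$, and each edge crossing level $t$ loses fractional volume at rate $2w(a,b)/\abs{\matr v(a)-\matr v(b)}$, so the right-hand side equals $-\tfrac12\tfrac{d}{dt}\vol_{\geq}(t)$. (ii) The expander inequality gives $\delta(t) \ge \Phi\min(\vol(S_t),\vol(V\setminus S_t))$, and since $\vol_{\geq}(t)\le 2\vol(S_t)$ and $\vol(S_t)\ge 1$ (and symmetrically for $V\setminus S_t$, the $+1$ regularization covering the top of the range where $\vol_{\geq}$ vanishes), one gets $\delta(t) \ge \tfrac{\Phi}{3}\,\vol_{\geq}^{+}(t)$ on the half of the voltage range on which the relevant side is the lighter one. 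Dividing the Cauchy--Schwarz bound by $\delta(t)$ and substituting $\delta(t)\ge\tfrac{\Phi}{3}\vol_{\geq}^{+}(t)$ into the denominator produces the pointwise inequality.

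\textbf{Integration, and the main difficulty.} Finally I would split $(\matr v(y),\matr v(x))$ at the median-volume threshold $t^{*}$ (where $\vol(S_t)$ crosses $\vol(V)/2$), reduce the lower half to the upper half by the reflection $\matr v\mapsto-\matr v$, and integrate the pointwise inequality on each half; each half contributes at most $\tfrac{3}{\Phi}$ times the drop of $\ln\vol_{\geq}^{+}$ over that half, which is at most $\ln\vol_{\geq}^{+}(t^{*})\le\ln(\vol(V)+1)$, and summing the two halves yields $\int\delta(t)\,dt \le 3\ln(\vol(V))/\Phi$ after absorbing constants. The difficult step is the pointwise inequality: one has to get the fractional-volume bookkeeping exactly right --- that $-\tfrac{d}{dt}\vol_{\geq}(t)$ is precisely $2\sum_{(a,b)\in C_t}w(a,b)/\abs{\matr v(a)-\matr v(b)}$, that $\vol_{\geq}^{+}(t)$ is comparable to $\min(\vol(S_t),\vol(V\setminus S_t))$ with a constant sharp enough to land the final factor $3$ (presumably the reason the definition of fractional volume is shaped the way it is), and that the piecewise-linear, non-smooth dependence of $\vol_{\geq}$ on $t$ as edges enter and leave $C_t$ does not disrupt integrating the logarithmic derivative. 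The transposition reduction, the co-area identity, the flow-across-a-cut computation, and the two Cauchy--Schwarz applications are all routine once that relationship is pinned down.
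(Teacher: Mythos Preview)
Your proposal is correct and follows essentially the same route as the paper: the same transposition reduction to $\max_{e'}\lVert\hat{\matr f}^{(e')}\rVert_1$, the same co-area rewriting $\int\delta(t)\,dt$, the same Cauchy--Schwarz step combining $\sum_{(a,b)\in C_t} w(a,b)\lvert\matr v(a)-\matr v(b)\rvert=1$ with $-\tfrac{d}{dt}\vol_{\geq}(t)=2\sum_{(a,b)\in C_t}\tfrac{w(a,b)}{\lvert\matr v(a)-\matr v(b)\rvert}$ to get $-\tfrac{d}{dt}\vol_{\geq}^+(t)\ge 2\delta(t)^2$, the same expander comparison $\delta(t)\ge\tfrac{\Phi}{3}\vol_{\geq}^+(t)$, and the same median-threshold split and logarithmic integration. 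One bookkeeping remark: your own chain actually yields $\delta(t)\le -\tfrac{3}{2\Phi}\tfrac{d}{dt}\ln\vol_{\geq}^+(t)$ rather than the $\tfrac{3}{\Phi}$ you wrote, and it is this $\tfrac{3}{2\Phi}$ per half that gives the stated constant $3$ after summing the two halves.
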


\begin{proof}
We first use that 
\begin{equation}
    \label{equation_rho_as_1norm}
   \rho_{\infty}(\matr{A}_\mathcal{E}) = \max_{e \in E} \, \left\lVert \matr{W} \matr{B}^\trp \matr{L}^+ \matr{\chi}_e \right\rVert_1.
\end{equation}
as shown in \cite{KLOS} as part of the proof of Lemma 26 (by using primarily Lemmas 10 and 11). In order to prove the desired inequality, we then fix an \(e \in E\) such that the quantity in \eqref{equation_rho_as_1norm} gets maximized, 
and let \(\matr{v} = \matr{L}^+ \matr{\chi}_e\) be the voltage induced by setting a unitary demand on this edge. Thus, we equivalently aim to bound:
\begin{equation}
    \label{equation_rho_as_voltage_sum}
 \rho_{\infty}(\matr{A}_\mathcal{E}) = \left\lVert \matr{W} \matr{B}^\trp \matr{L}^+ \boldsymbol{\chi}_e \right\rVert_1 = \left\lVert \matr{W} \matr{B}^\trp \matr{v} \right\rVert_1 = \sum_{(a, b) \in E}{w(a, b) \cdot \lvert \matr{v}(a) - \matr{v}(b)\rvert}.
\end{equation}
Observe now that shifting all the values of \(\matr{v}\) by the same constant does not change the value expressed in \eqref{equation_rho_as_voltage_sum}, and therefore we can assume without loss of generality that the voltages are centered around \(0\), that is we can assume \( \vol\left(\{ i \in V \, | \, \matr{v}(i) \geq 0\}\right) \geq \vol(V) / {2} \) and \( \vol\left(\{ i \in V \, | \, \matr{v}(i) \leq 0\}\right) \geq \vol(V) / {2}\).

Note that, by convention, the electrical flow \(\matr{f}_\mathcal{E} = \matr{A}_\mathcal{E} \matr{\chi}_e\) induces an orientation on the edges in the set \(E\). Henceforth, we assume without loss of generality that orientations of edges in $E$ align with the direction of the electrical flow \(\matr{f}_\mathcal{E}\), that is  \(\matr{f}_\mathcal{E}(a, b) = w(a, b) \cdot (\matr{v}(b) - \matr{v}(a)) \geq 0\) for any \((a, b) \in E\).

In the following, we will employ the definitions introduced in Section \ref{section_definitions}. These concepts give rise to the notion of  ``fractional'' volume, which will ultimately allow us to bound the quantity of Equation \eqref{equation_rho_as_voltage_sum}.

It can easily be proven that \(\vol_\geq\) is continuous in \(\R\) and differentiable at any threshold level \(t \in \R\) for which there does not exist a node \(a \in V\) such that \(\matr{v}(a) = t\). Furthermore, if \(t_\mathrm{min} = \min{\{\matr{v}(a) \, | \, a \in V \}}\) and \(t_\mathrm{max} = \max{\{\matr{v}(a) \, | \, a \in V \}}\), then \(\vol_\geq(t_\mathrm{min}) = \vol(G)\) and \(\vol_\geq(t_\mathrm{max}) = 0\). We can even assume without loss of generality a more precise centering of the voltages around \(0\), namely that \(\vol_\geq(0) = \vol(V) / {2}\).

A voltage threshold level \(t \in \R\) can be seen as naturally determining a cut \(C_t\) in \(G\). Note that the assumption about centering the voltages around \(0\) ensures that \(\vol(S_t) \leq \vol(V) / {2}\) for any \(t > 0\), so it holds that \(\min \{ \vol(S_t), \vol(V \setminus S_t) \} = \vol(S_t)\).

By taking the orientation of the edges into account, we can drop the absolute value operator and rewrite Equation \eqref{equation_rho_as_voltage_sum} as:
\begin{equation}
\begin{aligned}
\label{equation_ratio_as_integral}
\sum_{(a, b) \in E}{w(a, b) \cdot \lvert \matr{v}(a) - \matr{v}(b) \rvert} & = \sum_{(a, b) \in E}{w(a, b) \cdot (\matr{v}(b) - \matr{v}(a))} \\
& = \sum_{(a, b) \in E}{w(a, b) \cdot \int_{t_\mathrm{min}}^{t_\mathrm{max}}{\mathbbm{1}_{\matr{v}(a) < t \leq \matr{v}(b)} \, dt}} \\
& = \int_{t_\mathrm{min}}^{t_\mathrm{max}}{\sum_{(a, b) \in E}{w(a, b) \cdot \mathbbm{1}_{\matr{v}(a) < t \leq \matr{v}(b)}} \, dt} \\
& = \int_{t_\mathrm{min}}^{t_\mathrm{max}}{\delta(t) \, dt} \\
& = \int_{t_\mathrm{min}}^0{\delta(t) \, dt} + \int_0^{t_\mathrm{max}}{\delta(t) \, dt}.
\end{aligned}
\end{equation}
We will bound the quantity in \eqref{equation_ratio_as_integral} by separately bounding each of the two terms in the last equality. Only the proof for the integral over the non-negative values of \(t\) will be presented, the one for the non-positive values proceeds in an analogous manner. Assume thus for the rest of the proof that \(t \geq 0\) holds.

In order to obtain the bound on \(\delta(t)\) for non-negative \(t\), we will inspect the rate of change of the ``fractional'' volume with respect to the voltage threshold of the fixed voltage vector $\matr{v}$. Intuitively, we grow an electrical threshold ball, and directly relate the change in volume to the stretch
accumulated at the current voltage threshold. In more precise terms, we compute a bound on the derivative of \(\vol_\geq^+\) with respect to \(t\) on the domain of differentiability as follows:
\begin{equation}
\begin{aligned}
\label{equation_derivative_of_vol}
-\frac{d}{dt}\vol_\geq^+(t) &= -\frac{d}{dt}\vol_\geq(t) \\
& = -\frac{d}{dt}\left(\sum_{(a, b) \in E} {\vol_{\geq t}(a, b)}\right) \\
& = -\frac{d}{dt}\left(\sum_{(a, b) \in C_t}{\vol_{\geq t}(a, b)}\right) \\
& = \sum_{(a, b) \in C_t}{-\frac{d}{dt}\vol_{\geq t}(a, b)} \\
& = \sum_{(a, b) \in C_t}{2 \frac{w(a, b)}{\matr{v}(b) - \matr{v}(a)}}.
\end{aligned}
\end{equation}
By the construction based on the voltage levels, all edges of the cut \(C_t\) have their head in the set \(S_t\). Therefore, the flow carried by these edges has to be the unit flow, since this is the demand of \(\matr{\chi}_e\):
\begin{equation}
    \label{equation_sum_of_voltages_eq_1}
    1 = \sum_{(a, b) \in C_t}{f_\mathcal{E}(a, b)} = \sum_{(a, b) \in C_t}{w(a, b) \cdot (\matr{v}(b) - \matr{v}(a))}.
\end{equation}

Hereafter, we show that the
negative volume change must exceed the square of the cut size. Informally, the change in volume per edge
is relatively large whenever the voltage gap across the edge is small (volume change scales as
inversely proportional to the gap compared to the cut-value of the edge). But, a ``typical'' edge in the cut must have a fairly small voltage gap, as we otherwise route too much flow across the gap. Formally, since the voltage drops among the edges in the cut \(C_t\) are non-negative, we can use \eqref{equation_sum_of_voltages_eq_1} and the definition of the conductance of \(G\) to further bound \eqref{equation_derivative_of_vol} using the Cauchy–Bunyakovsky–Schwarz inequality:
\begin{equation}
\begin{aligned}
\label{equation_derivative_phi_delta}
    -\frac{d}{dt}\vol_\geq^+(t) &= 2 \sum_{(a, b) \in C_t}{\frac{w(a, b)}{\matr{v}(b) - \matr{v}(a)}} \\
    & = 2 \left(\sum_{(a, b) \in C_t}{\frac{w(a, b)}{\matr{v}(b) - \matr{v}(a)}} \right) \cdot 1\\
    & = 2 \left(\sum_{(a, b) \in C_t}{\frac{w(a, b)}{\matr{v}(b) - \matr{v}(a)}} \right) \left( \sum_{(a, b) \in C_t}{w(a, b) \cdot (\matr{v}(b) - \matr{v}(a))} \right) \\
    &\geq 2 \left(\sum_{(a, b) \in C_t}{\sqrt{\frac{w(a, b)}{\matr{v}(b) - \matr{v}(a)} \cdot w(a, b) \cdot (\matr{v}(b) - \matr{v}(a))}} \right)^2\\
    & \geq 2 \left(\sum_{(a, b) \in C_t}{w(a, b)} \right)^2\\
    & = 2 \cdot \delta(t)^2\\
    & \geq 2 \cdot \delta(t) \cdot \Phi \cdot \vol(S_t).
\end{aligned}
\end{equation}
Denote by \(\vol_\mathrm{int}(t) = \vol(S_t) - \delta(t)\) twice the weight of the edges that have both endpoints in the set \(S_t\). Recall that we assumed all of the edges to have weights at least \(1\), therefore it holds that \(\delta(t) \geq 1\). The definition of ``fractional'' volume implies \(\vol_\geq(t) \leq \vol_\mathrm{int} + 2\delta(t)\), which can be used to further bound \eqref{equation_derivative_phi_delta}:
\begin{align*}
    -\frac{d}{dt}\vol_\geq^+(t) & \geq 2 \cdot \delta(t) \cdot \Phi \cdot \vol(S_t) \\
    & = 2 \cdot \delta(t) \cdot \Phi \cdot (\vol_\mathrm{int}(t) + \delta(t)) \\
    & = \delta(t) \cdot \frac{2\Phi}{3} \cdot (3 \vol_\mathrm{int}(t) + 3\delta(t)) \\
    & \geq \delta(t) \cdot \frac{2\Phi}{3} \cdot (\vol_\mathrm{int}(t) + 2\delta(t) + 1) \\
    & \geq \delta(t) \cdot \frac{2\Phi}{3} \cdot (\vol_\geq(t) + 1) \\
    & = \delta(t) \cdot \frac{2\Phi}{3} \cdot \vol_\geq^+(t).
\end{align*}
Observe that we can rewrite the inequality above to obtain a bound on \(\delta(t)\):
\begin{equation}
    \label{equation_delta_bound}
    \delta(t) \leq -\frac{3}{2 \Phi} \cdot \frac{1}{\vol_\geq^+(t)} \cdot \frac{d}{dt}\vol_\geq^+(t).
\end{equation}

We can now use equation \eqref{equation_delta_bound} to bound the integral over the interval \([0, t_\mathrm{max}]\) in \eqref{equation_ratio_as_integral}:
\[
    \int_0^{t_\mathrm{max}}{\delta(t) \, dt} \leq \int_0^{t_\mathrm{max}}{-\frac{3}{2 \Phi} \cdot \frac{1}{\vol_\geq^+(t)} \cdot \frac{d}{dt}\vol_\geq^+(t) \, dt}.
\]
The latter integral can easily be computed via the change of variable \(u = \vol_\geq^+(t)\), yielding the integration bounds \(\vol_\geq^+(t_\mathrm{max}) = \vol_\geq(t_\mathrm{max}) + 1 = 1\) and \(\vol_\geq^+(0) = \vol_\geq(0) + 1 = \vol(V) / {2} + 1\):
\begin{align*}
    \int_0^{t_\mathrm{max}}{\delta(t) \, dt} & \leq \int_0^{t_\mathrm{max}}{-\frac{3}{2 \Phi} \cdot \frac{1}{\vol_\geq^+(t)} \cdot \frac{d}{dt}\vol_\geq^+(t) \, dt} \\
    & = \int_{t_\mathrm{max}}^0{\frac{3}{2 \Phi} \cdot \frac{1}{\vol_\geq^+(t)} \cdot \frac{d}{dt}\vol_\geq^+(t) \, dt} \\
    & = \frac{3}{2 \Phi} \int_1^{\vol(V) / {2} + 1}{\frac{1}{u} \, du}.
\end{align*}
Since the graph has by assumption at least two nodes connected by an edge with weight at least \(1\), it follows that \(\vol(V) \geq 2 \iff \vol(V) \geq \vol(V) / {2} + 1\). Coupled with the fact that \(u > 0\) for \(u \in [1, \vol(V)]\), this gives us the final bound for the integral:
\begin{align*}
    \int_0^{t_\mathrm{max}}{\delta(t) \, dt} &\leq \frac{3}{2 \Phi} \int_1^{\vol(V) / {2} + 1}{\frac{1}{u} \, du} \\
    &\leq \frac{3}{2 \Phi} \int_1^{\vol(V)}{\frac{1}{u} \, du} \\
    & = \frac{3}{2 \Phi} \cdot \ln(\vol(V)).
\end{align*}
As already mentioned, the same bound can be obtained for the other term in \eqref{equation_ratio_as_integral} in an analogous manner.

Combining the aforementioned result with the relations given by \eqref{equation_rho_as_voltage_sum} and \eqref{equation_ratio_as_integral} gives the desired inequality:
\[
\rho_{\infty}(\matr{A}_\mathcal{E}) = \int_{t_\mathrm{min}}^0{\delta(t) \, dt} + \int_0^{t_\mathrm{max}}{\delta(t) \, dt} \leq 2 \cdot \frac{3 \ln(\vol(V))}{2\Phi} = \frac{3\ln(\vol(V))}{\Phi}.
\]
\end{proof}
\section{An Upper Bound on the Competitive Ratio of Electrical Flow Routing for $\ell_{p}$ (for any $p$)}
\label{sec:upperBoundLp}

In this section, we prove two generalizations of Theorem \ref{lemma_improved_competitive_bound}. Previously, we gave a bound on the competitive ratio when the congestion was defined in terms of the $\ell_{\infty}$-norm. This result can be extended to \(\ell_p\)-norms for an arbitrary \(p \in [1, \infty]\) by using Theorem \ref{lemma_improved_competitive_bound}, and instantiating a special case of the Riesz-Thorin theorem. 

This establishes both the results in Theorem \ref{thm:mainLp} and in Corollary \ref{thm:mainUpperBoundViaLocalization}. We stress that the results obtained in this section crucially exploit the symmetry of $\Pi$ and therefore only hold for unweighted graphs.

\paragraph{A Toolbox for $\ell_p$-Norms.} We first use the following result that we prove to much broader generality in Appendix \ref{sec:congestionAbsNorm}.

\begin{lemma}[Competitive ratio of $\ell_p$-norms]
\label{lemma_competitive_ratio_as_norm}
Let \(G = (V, E)\) be a multigraph. For any $p \in [1, \infty]$ and oblivious routing \(\matr{A}\), we have
\[
\rho_p(\matr{A}) = \| \lvert \matr{A} \matr{B}\rvert \|_{p \to p}.
\]
\end{lemma}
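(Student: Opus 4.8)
The plan is to pin down the worst case over all multi-commodity demand families and show it equals the $p\to p$ operator norm of $\abs{\matr{A}\matr{B}}$. Write $\matr{M} = \matr{A}\matr{B} \in \R^{E\times E}$, and recall that here the congestion of a flow multiset $\{\matr{f}_i\}$ is $\norm{\sum_i \abs{\matr{f}_i}}_p$ (the weight matrix is the identity). I would prove the two inequalities separately.

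For $\rho_p(\matr{A}) \le \norm{\abs{\matr{M}}}_{p\to p}$: fix a demand multiset $D=\{\matr{\chi}_i\}$ and let $\{\matr{f}^*_i\}$ be flows attaining $\opt(D)$, so $\matr{B}\matr{f}^*_i=\matr{\chi}_i$ and $\opt(D)=\norm{\sum_i\abs{\matr{f}^*_i}}_p$. Since $\matr{A}$ routes $\matr{\chi}_i$, we have $\matr{A}\matr{\chi}_i=\matr{A}\matr{B}\matr{f}^*_i=\matr{M}\matr{f}^*_i$, and the row-wise triangle inequality gives $\abs{\matr{M}\matr{f}^*_i}\le\abs{\matr{M}}\abs{\matr{f}^*_i}$ entrywise, hence $\sum_i\abs{\matr{A}\matr{\chi}_i}\le\abs{\matr{M}}\sum_i\abs{\matr{f}^*_i}$ entrywise. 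Monotonicity of $\norm{\cdot}_p$ on the nonnegative orthant together with the definition of the operator norm then yields $\conge(\{\matr{A}\matr{\chi}_i\})\le\norm{\abs{\matr{M}}}_{p\to p}\cdot\opt(D)$; maximizing over $D$ finishes this direction.

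For $\rho_p(\matr{A})\ge\norm{\abs{\matr{M}}}_{p\to p}$, the idea is to restrict attention to demand families built out of the elementary edge demands. For $\matr{m}\in\Z_{\ge 0}^E\setminus\{\matr{0}\}$, let $D_{\matr{m}}$ be the multiset containing $\matr{m}(e)$ copies of the unit demand $\matr{\chi}_e$ for each $e\in E$. Routing each copy directly along its own edge shows $\opt(D_{\matr{m}})\le\norm{\matr{m}}_p$; on the other hand $\abs{\matr{A}\matr{\chi}_e}$ is the entrywise absolute value of the $e$-th column of $\matr{M}$ (since $\matr{\chi}_e$ and $\matr{B}\matr{1}_{\{e\}}$ agree up to sign, with $\matr{1}_{\{e\}}\in\R^E$), so the flows $\{\matr{A}\matr{\chi}:\matr{\chi}\in D_{\matr{m}}\}$ satisfy $\sum\abs{\matr{A}\matr{\chi}}=\sum_{e}\matr{m}(e)\abs{\matr{M}\matr{1}_{\{e\}}}=\abs{\matr{M}}\matr{m}$, giving $\rho_p(\matr{A})\ge\norm{\abs{\matr{M}}\matr{m}}_p/\norm{\matr{m}}_p$ for every such $\matr{m}$. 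It then remains to verify $\sup_{\matr{m}\in\Z_{\ge 0}^E\setminus\{\matr{0}\}}\norm{\abs{\matr{M}}\matr{m}}_p/\norm{\matr{m}}_p=\norm{\abs{\matr{M}}}_{p\to p}$: because $\abs{\matr{M}}\ge 0$ we have $\abs{\abs{\matr{M}}\matr{x}}\le\abs{\matr{M}}\abs{\matr{x}}$, so the Rayleigh quotient of $\abs{\matr{M}}$ is already maximized over the nonnegative orthant, and $\matr{x}\mapsto\norm{\abs{\matr{M}}\matr{x}}_p/\norm{\matr{x}}_p$ is scale-invariant and continuous on $\R_{\ge 0}^E\setminus\{\matr{0}\}$, so a near-optimal nonnegative real $\matr{x}$ can be approximated by a rational and hence (after clearing denominators) an integer vector up to an arbitrarily small additive $\epsilon$; letting $\epsilon\to 0$ gives $\rho_p(\matr{A})\ge\norm{\abs{\matr{M}}}_{p\to p}$. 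For $p\in\{1,\infty\}$ this last step is unnecessary, as $\matr{m}=\matr{1}_{\{e^*\}}$ for the heaviest column, respectively $\matr{m}=\matr{1}$, is already optimal.

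The only genuine content is the lower-bound reduction: recognizing that feeding the router many copies of the elementary edge demands is exactly what forces $\sum_i\abs{\matr{A}\matr{\chi}_i}$ to collapse to $\abs{\matr{A}\matr{B}}\matr{m}$, putting the entrywise absolute value of $\matr{A}\matr{B}$ precisely where the operator norm needs it. The remaining wrinkle — moving from nonnegative real test vectors to integer multiplicities — is routine via the continuity and scaling argument above, and everything in the upper bound is bookkeeping with the triangle inequality and norm monotonicity.
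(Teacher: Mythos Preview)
Your proof is correct and follows essentially the same two-step strategy as the paper (upper bound via the routing identity, lower bound via edge demands), though your upper-bound argument is in fact more direct: you use $\matr{A}\matr{\chi}_i=\matr{M}\matr{f}^*_i$ and the entrywise triangle inequality in one line, whereas the paper takes a detour through an auxiliary demand set $D'$ obtained by splitting the optimal flows edge-by-edge (this detour is what makes their argument go through for arbitrary monotonic norms and weights, which you do not need here).

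One simplification you may want to adopt for the lower bound: instead of taking $\matr{m}(e)$ \emph{copies} of $\matr{\chi}_e$ and then passing from $\Z_{\ge 0}^E$ to $\R_{\ge 0}^E$ by density, just take the finite family $\{\,\matr{m}(e)\cdot\matr{\chi}_e : e\in E\,\}$ with $\matr{m}\in\R_{\ge 0}^E$. Each of these is a single commodity, the same single-edge routing gives $\opt\le\norm{\matr{m}}_p$, and $\sum_e\abs{\matr{A}(\matr{m}(e)\matr{\chi}_e)}=\abs{\matr{M}}\matr{m}$ exactly as before. This removes the continuity/scaling step entirely and matches how the paper handles it.
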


Our second tool is the Riesz-Thorin theorem. We explicitly state the two relevant special cases of the theorem that we require in the next section for the convenience of the reader.

\begin{theorem}[Special cases of the Riesz-Thorin theorem, see {{\cite[Theorem 1.3]{SW71}}}]
    \label{lemma:special_case_riesz_thorin}
    Let \(\matr{A} \in \R^{m \times n}\) be a matrix with non-negative entries. For any \(p \in (1, \infty)\) it holds that:
    \begin{align*}
     \lVert \matr{A} \rVert_{p \to p} \leq \lVert \matr{A} \rVert_{1 \to 1}^\frac{1}{p} \cdot \lVert \matr{A} \rVert_{\infty \to \infty}^{1 - \frac{1}{p}}.
    \end{align*}
Furthermore, for $p \in (2, \infty)$,
    \begin{align*}
         \lVert \matr{A} \rVert_{p \to p} \leq \lVert \matr{A} \rVert_{2 \to 2}^\frac{2}{p} \cdot \lVert \matr{A} \rVert_{\infty \to \infty}^{1 - \frac{2}{p}}.
    \end{align*}
\end{theorem}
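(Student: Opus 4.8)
The plan is to obtain both inequalities as direct instances of the Riesz--Thorin interpolation theorem (i.e.\ of \cite[Theorem 1.3]{SW71}), viewing $\matr{A}$ as a linear operator between $\ell_p$ spaces over finite index sets equipped with the counting measure. A preliminary reduction handles the scalar field: since $\matr{A}$ has non-negative entries, $\lvert \matr{A}\matr{x}\rvert \leq \matr{A}\lvert\matr{x}\rvert$ entrywise for every (complex) vector $\matr{x}$, so $\lVert\matr{A}\matr{x}\rVert_p \leq \lVert\matr{A}\lvert\matr{x}\rvert\rVert_p$ and hence $\lVert\matr{A}\rVert_{p\to p}$ is the same whether computed over real or over complex vectors, for every $p$. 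Consequently we may invoke the complex form of Riesz--Thorin, whose interpolation constant is exactly $1$, and then read off the stated real inequalities.

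For the first inequality I would interpolate between the endpoint pairs $(p_0,q_0)=(1,1)$ and $(p_1,q_1)=(\infty,\infty)$ — at which $\lVert\matr{A}\rVert_{p_i\to q_i}$ is just the maximum column-, resp.\ row-, absolute-value sum. For $\theta\in[0,1]$ the interpolated exponent satisfies $1/p_\theta = (1-\theta)\cdot 1 + \theta\cdot 0 = 1-\theta$, so the choice $\theta = 1-1/p$ (valid since $1/p\in(0,1)$ for $p\in(1,\infty)$) gives $p_\theta=p$, and Riesz--Thorin yields
\[
\lVert\matr{A}\rVert_{p\to p} \leq \lVert\matr{A}\rVert_{1\to 1}^{\,1-\theta}\,\lVert\matr{A}\rVert_{\infty\to\infty}^{\,\theta} = \lVert\matr{A}\rVert_{1\to 1}^{1/p}\,\lVert\matr{A}\rVert_{\infty\to\infty}^{1-1/p}.
\]
For the second inequality I would instead use endpoints $(p_0,q_0)=(2,2)$ and $(p_1,q_1)=(\infty,\infty)$; now $1/p_\theta = (1-\theta)/2$, so $\theta = 1-2/p$ (valid since $2/p\in(0,1)$ for $p\in(2,\infty)$) gives $p_\theta = p$, and Riesz--Thorin gives $\lVert\matr{A}\rVert_{p\to p}\leq \lVert\matr{A}\rVert_{2\to 2}^{2/p}\,\lVert\matr{A}\rVert_{\infty\to\infty}^{1-2/p}$.

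I do not expect any real obstacle here beyond the routine exponent bookkeeping and the real-versus-complex reduction above; the mathematical content is entirely carried by the cited theorem, so I would present the statement as a corollary rather than reprove interpolation. As a remark, the first inequality also admits a self-contained proof: for $\matr{x}\geq 0$ with $\lVert\matr{x}\rVert_p=1$ (which loses nothing, by non-negativity of $\matr{A}$) and $1/p+1/q=1$, Hölder's inequality applied to $\sum_j A_{ij}x_j = \sum_j A_{ij}^{1/q}(A_{ij}^{1/p}x_j)$ gives $(\matr{A}\matr{x})_i^p \leq (\sum_j A_{ij})^{p/q}\sum_j A_{ij}x_j^p \leq \lVert\matr{A}\rVert_{\infty\to\infty}^{p/q}\sum_j A_{ij}x_j^p$; summing over $i$ and exchanging the order of summation bounds $\sum_{i,j}A_{ij}x_j^p = \sum_j x_j^p\sum_i A_{ij} \leq \lVert\matr{A}\rVert_{1\to 1}$, whence $\lVert\matr{A}\matr{x}\rVert_p^p \leq \lVert\matr{A}\rVert_{\infty\to\infty}^{p/q}\lVert\matr{A}\rVert_{1\to 1}$. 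The second inequality, in contrast, appears to genuinely require an interpolation argument, since $\lVert\matr{A}\rVert_{2\to 2}$ is not a row/column-sum quantity; that is the one step I would route through Riesz--Thorin rather than attempt from scratch.
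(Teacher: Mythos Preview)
Your proposal is correct and matches the paper's treatment: the paper does not give a proof at all but simply states the two bounds as special cases of the Riesz--Thorin theorem, citing \cite[Theorem 1.3]{SW71}. Your added care about the real-versus-complex reduction (exploiting non-negativity of $\matr{A}$) and the direct H\"older argument for the first inequality go beyond what the paper records, but the core approach---invoke Riesz--Thorin with the appropriate $\theta$---is exactly the same.
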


\paragraph*{Proof of Theorem \ref{thm:mainLp}} We have that the theorem already holds for $\ell_1$ and $\ell_{\infty}$ since we have proven Theorem \ref{thm:mainLinf} in the previous section. Consider therefore any $p \in (1, \infty)$. Then, we have
\begin{align*}
    \rho_{p}(\matr{A}_\mathcal{E}) & \overset{\text{Lemma \ref{lemma_competitive_ratio_as_norm}}}{=} \left\lVert \lvert \matr{A}_\mathcal{E} \matr{B} \rvert \right\rVert_{p \to p} \\
    & \overset{\text{Theorem \ref{lemma:special_case_riesz_thorin}}}{\leq} \lVert \lvert \matr{A}_\mathcal{E} \matr{B} \rvert \rVert_{1 \to 1}^\frac{1}{p} \cdot \lVert \lvert \matr{A}_\mathcal{E} \matr{B} \rvert \rVert_{\infty \to \infty}^{1 - \frac{1}{p}} \\
    & \overset{\text{Lemma \ref{lemma_competitive_ratio_as_norm}}}{=} \left( \rho_{1}(\matr{A}_\mathcal{E}) \right)^\frac{1}{p} \cdot \left( \rho_{\infty}(\matr{A}_\mathcal{E}) \right)^{1-\frac{1}{p}} \\
    &\overset{\text{Theorem \ref{thm:mainLinf}}}{\leq} \left( 3 \cdot \frac{\ln(2m)}{\Phi} \right)^\frac{1}{p} \cdot \left( 3 \cdot \frac{\ln(2m)}{\Phi} \right)^{1-\frac{1}{p}} \\
    &= 3 \cdot \frac{\ln(2m)}{\Phi}.
\end{align*}

\paragraph*{Proof of Corollary \ref{thm:mainUpperBoundViaLocalization}}

Consider next any $p \in (1, \infty)$. Since we have again that for $q$ given by $1/p + 1/q = 1$, we have $\| \matr{X} \|_{p \to p} = \| \matr{X}^{\trp} \|_{q \to q}$, we can assume w.l.o.g. that $p \geq 2$. Similarly to \cite{lawler2009mixing}, we obtain
\begin{align*}
    \rho_{p}(\matr{A}_\mathcal{E}) & \overset{\text{Lemma \ref{lemma_competitive_ratio_as_norm}}}{=} \left\lVert \lvert \matr{A}_\mathcal{E} \matr{B} \rvert \right\rVert_{p \to p} \\
    & \overset{\text{Theorem \ref{lemma:special_case_riesz_thorin}}}{\leq} \lVert \lvert \matr{A}_\mathcal{E} \matr{B} \rvert \rVert_{2 \to 2}^\frac{2}{p} \cdot \lVert \lvert \matr{A}_\mathcal{E} \matr{B} \rvert \rVert_{\infty \to \infty}^{1 - \frac{2}{p}} \\
    & \overset{\text{Lemma \ref{lemma_competitive_ratio_as_norm}}}{=} \left( \rho_{2}(\matr{A}_\mathcal{E}) \right)^\frac{2}{p} \cdot \left( \rho_{\infty}(\matr{A}_\mathcal{E}) \right)^{1-\frac{2}{p}} \\
    &\overset{\text{Lemma \ref{lemma_competitive_ratio_as_norm}, Theorem \ref{thm:mainLinf}}}{\leq} \left( \| \abs{\Pi} \|_{2 \to 2} \right)^\frac{2}{p} \cdot \left( 3 \cdot \frac{\ln(2m)}{\Phi} \right)^{1-\frac{2}{p}}.
\end{align*}

\section{A Lower Bound for Competitive Ratio of Electric Flow Routing}
\label{sec:lowerBound}

Finally, in this section, we provide a strong lower bound on the competitive ratio of the electrical routing scheme in any $\ell_p$-norm.

\begin{theorem}[Restatement of Theorem \ref{thm:mainLowerBound}]
\label{lemma_improved_lower_bound}
For an infinite number of positive integers $n$ and any $\Phi \in [1/\sqrt[3]{n}, 1]$, for any $p \in [2,\infty]$ and $q$ given by $1/p + 1/q = 1$, we have that
\[
\rho_{p}(\matr{A}_{\mathcal{E}}), \rho_{q}(\matr{A}_{\mathcal{E}})
\geq \Omega\left( \frac{\log m}{\Phi^{1-2/p}}\right).
\]
\end{theorem}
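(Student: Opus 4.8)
The plan is to exhibit, for suitable $n$ and $\Phi$, a single $\Phi$-expander $G$ together with a demand that the electrical routing $\matr{A}_{\mathcal{E}}$ handles badly in the $\ell_p$-norm. By Lemma \ref{lemma_competitive_ratio_as_norm} it suffices to lower bound $\|\,|\matr{A}_{\mathcal{E}}\matr{B}|\,\|_{p\to p} = \|\,|\Pi|\,\|_{p\to p}$, and since $\Pi$ is symmetric and $\|\matr{X}\|_{p\to p}=\|\matr{X}^\trp\|_{q\to q}$ it is enough to do this for $p\geq 2$; the $\ell_q$ bound then follows for free. To lower bound $\|\,|\Pi|\,\|_{p\to p}$ I would display a single test edge (equivalently, column of $\matr{B}$) whose electrical flow has large $\ell_p$-mass, i.e. I pick the edge $e=(x,y)$ as in the proof of Theorem \ref{lemma_improved_competitive_bound} and estimate $\|\matr{B}^\trp\matr{L}^+\matr\chi_e\|_p$ from below. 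Concretely, $\|\,|\Pi|\,\|_{p\to p}\geq \|\,|\Pi|\matr{1}_e\,\|_p / \|\matr{1}_e\|_p = \||\Pi| \matr 1_e\|_p = \|\matr B^\trp \matr v\|_p$ where $\matr v = \matr L^+\matr\chi_e$, so everything reduces to understanding the voltage vector induced by one edge.

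The construction I have in mind is the one hinted at in the ``Contrast with Kelner--Maymounkov'' paragraph, suitably padded to become a genuine $\Phi$-expander of the right size. Start from the gadget: a direct edge $xy$, plus $k$ internally disjoint length-$k$ paths from $x$ to $y$; here by symmetry the voltage drops uniformly along each parallel path, so each of the $\Theta(k^2)$ path-edges carries $\Theta(1/k)$ units of flow and the $xy$-edge carries $\Theta(1)$. Then $\|\matr B^\trp\matr v\|_p^p = \Theta(1) + \Theta(k^2)\cdot\Theta(1/k)^p = \Theta(1)+\Theta(k^{2-p})$, which for $p>2$ is dominated by the constant, giving only $\Omega(1)$ — so the raw gadget is not enough and I need $\Theta(\log m)$ such gadgets acting on disjoint voltage ranges, chained in series (a path of $\Theta(\log m)$ ``blobs''), so that the electrical flow across the chosen edge must traverse all of them, accumulating $\log m$ contributions of size $\Omega(1)$ each in $\ell_p$; simultaneously the parallel-path multiplicity is tuned to be $\Theta(1/\Phi)$ rather than a fixed $k$, which is what injects the $\Phi^{-(1-2/p)}$ factor (each blob contributes $\Theta((1/\Phi)^{2})\cdot\Theta(\Phi)^p = \Theta(\Phi^{p-2})$ to the $p$-th power from its short direct edge versus its parallel bundle, and balancing the two types of edges in the blob against the expander constraint $|\partial S|\geq \Phi\vol(S)$ pins down the exponent). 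The final graph is then made $\Phi$-expanding by attaching a low-diameter bounded-degree expander of the appropriate volume, or by taking the union with such an expander on the same vertex set, which changes $\Pi$ and the flow only negligibly while guaranteeing $|\partial S|\geq \Phi\vol(S)$; the constraint $\Phi\in[n^{-1/3},1]$ and the number of vertices needed ($n = \Theta(\Phi^{-2}\cdot\mathrm{polylog})$ per blob, times $\Theta(\log m)$ blobs) will be exactly what makes the parameters consistent.

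The computations I expect to be routine are: (i) solving the electrical flow on the symmetric gadget (series/parallel reduction, so voltages are explicit); (ii) summing a geometric-type series over the $\Theta(\log m)$ blobs to get the $\log m$ factor; (iii) verifying $\Phi$-expansion of the padded graph from standard facts about unions with expanders. The main obstacle — and the step deserving the most care — is making the three parameter regimes genuinely compatible at once: the blob must simultaneously (a) be a $\Phi$-expander after padding with the right volume so that the expansion of the whole series-chain is $\Omega(\Phi)$ and not smaller, (b) have the parallel-bundle multiplicity and direct-edge weights balanced so the electrical flow really splits in the claimed $\Theta(\Phi):\Theta(1)$ proportion and the $\ell_p$-mass per blob is $\Theta(\Phi^{p-2})$, and (c) keep the total edge count $m$ (and $n$) within the stated range $\Phi\geq n^{-1/3}$, so that $\log m = \Theta(\log n)$ and the $\Omega(\log m/\Phi^{1-2/p})$ bound is attained rather than merely $\Omega(\log m/\Phi^{1-2/p})$ up to a parameter mismatch. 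Once the blob parameters are chosen to satisfy (a)–(c) simultaneously, evaluating $\|\matr B^\trp\matr v\|_p$ on the chain and invoking Lemma \ref{lemma_competitive_ratio_as_norm} finishes both the $\ell_p$ and (by transposition) the $\ell_q$ lower bounds.
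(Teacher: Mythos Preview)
Your proposal has a fatal gap at the very first reduction. You lower-bound $\norm{\abs{\Pi}}_{p\to p}$ by the single test vector $\matr{1}_e$, i.e.\ by the $\ell_p$-norm of one column of $\Pi$. But $\Pi$ is an orthogonal projection, so for \emph{every} graph and every edge $e$,
\[
\norm{\Pi_{:,e}}_2^2 \;=\; (\Pi^{\trp}\Pi)_{e,e} \;=\; \Pi_{e,e} \;\leq\; 1,
\]
and hence $\norm{\,\abs{\Pi}\matr{1}_e\,}_p = \norm{\Pi_{:,e}}_p \leq \norm{\Pi_{:,e}}_2 \leq 1$ for all $p\geq 2$. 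No construction---single blob, chained blobs, padded with an expander or not---can push this quantity above $1$, so the single-edge test can never certify $\Omega(\log m/\Phi^{1-2/p})$. Your own computation on one gadget already hits this ceiling; chaining does not help because the same $\ell_2$ bound applies globally. (Your per-blob estimate $\Theta((1/\Phi)^2)\cdot\Theta(\Phi)^p=\Theta(\Phi^{p-2})$ is in any case $\leq 1$ for $p\geq 2$, confirming this.)

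The missing idea is to use a \emph{multi-commodity} demand, so that many single-edge flows pile up on the same small edge set. The paper does this as follows: take a constant-degree $\Omega(1)$-expander $G_{\Delta,n}$ of girth $\Omega(\log n)$, and route one unit across \emph{every} edge simultaneously. Girth forces each individual electrical flow to put $\Omega(\log n)$ total mass on $E(G_{\Delta,n})$; summed over all commodities, a typical edge of $G_{\Delta,n}$ carries $\Omega(\log n)$. To inject the $\Phi$-dependence, form $G^{\Phi}_{\Delta,n}$ by replacing each edge of $G_{\Delta,n}$ with $1/\Phi$ disjoint length-$1/\Phi$ paths, take the union $G=G_{\Delta,n}\cup G^{\Phi}_{\Delta,n}$ (which is a $\Theta(\Phi)$-expander), and scale the demand by $1/\Phi$. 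A constant fraction of each commodity still goes through $G_{\Delta,n}$, so the electrical routing has $\ell_p$-cost $\Omega(n^{1/p}\log n/\Phi)$, while the obvious routing through the parallel paths achieves $\Theta((n/\Phi^2)^{1/p})$; the ratio is $\Omega(\Phi^{2/p-1}\log n)$. The point is that the lower bound witness must be a vector supported on $\Theta(n)$ edges, not on one.
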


In our proof, we use the following theorem given in \cite{alon2021high}. We remind the reader that the \emph{girth} of a graph $G$ is the weight of the smallest weight cycle of $G$.

\begin{theorem}[{{\cite[Theorem 1.2]{alon2021high}}}]
\label{thm:constructDetExpander}
There are infinitely many positive integer $\Delta$ and $n$, for which an $n$-vertex unweighted graph $G_{\Delta, n} = (V,E)$ exists such that $G$ is $\Phi_{const}$-expander that is $\Delta$-regular with $\Phi_{const} = \Theta(1)$ such that $G$ has girth $\Omega(\log_{\Delta} n)$. 
\end{theorem}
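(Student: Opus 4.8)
The plan is to prove the statement by exhibiting an explicit infinite family of \emph{Ramanujan graphs} and reading off all three required properties from the construction. Concretely, I would fix a prime $p \equiv 1 \pmod 4$, set $\Delta = p+1$, and for each sufficiently large prime $q \equiv 1 \pmod 4$ (there are infinitely many) take the Lubotzky--Phillips--Sarnak graph $X^{p,q}$: this is, by construction, a connected $\Delta$-regular graph on $n = n(p,q) = \Theta(q^3)$ vertices. Letting $p$ also range over infinitely many primes then yields infinitely many admissible pairs $(\Delta, n)$, as required. A purely probabilistic alternative --- a random $\Delta$-regular graph with its short cycles surgically removed --- is discussed at the end; the explicit route is preferable here because it does not disturb $\Delta$-regularity.

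For expansion, I would use the Ramanujan property: every adjacency eigenvalue of $X^{p,q}$ other than $\pm\Delta$ has absolute value at most $2\sqrt{\Delta-1}$, so in particular its second-largest eigenvalue satisfies $\lambda_2 \le 2\sqrt{\Delta-1}$. The easy direction of the discrete Cheeger inequality for $\Delta$-regular graphs then gives, for every $S \subseteq V$ with $\vol(S) \le \vol(V)/2$,
\[
|\partial S| \;\ge\; \frac{\Delta-\lambda_2}{2\Delta}\,\vol(S)\;\ge\;\frac12\left(1-\frac{2\sqrt{\Delta-1}}{\Delta}\right)\vol(S).
\]
Since $1 - 2\sqrt{\Delta-1}/\Delta$ is bounded below by an absolute positive constant for every $\Delta \ge 3$ (it is positive at $\Delta = 3$ and increases to $1$ as $\Delta \to \infty$), $X^{p,q}$ is a $\Phi_{const}$-expander with $\Phi_{const} = \Theta(1)$ uniformly over the family; for a $\Delta$-regular graph this is exactly the paper's notion of expansion.

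For the girth, I would invoke the number-theoretic core of the LPS construction: a closed non-backtracking walk of length $\ell$ in $X^{p,q}$ corresponds to a product of $\ell$ of the distinguished quaternionic generators that is scalar modulo $q$, and a norm computation forces $p^{\ell} \gtrsim q$; hence every cycle has length at least $2\log_p q - O(1)$ (the sharp LPS bound being $\ge 4\log_p q - \log_p 4$ in the bipartite case). As $n = \Theta(q^3)$ we get $\log_p q = \tfrac13\log_p n - O(1)$, and since $\Delta > p$ we have $\log_p n \ge \log_\Delta n$; so for all but finitely many members of the family the girth is at least $\tfrac13\log_\Delta n = \Omega(\log_\Delta n)$.

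The main obstacle is this last girth bound, which genuinely needs the quaternion-algebra / lattice-point analysis behind the LPS (or Margulis) construction; a secondary black box is the Ramanujan eigenvalue bound used for expansion, which is the Ramanujan--Petersson estimate for weight-$2$ forms (Eichler/Deligne). If one prefers to avoid this machinery, the fallback is probabilistic: a uniformly random $\Delta$-regular graph on $n$ vertices is a $\Theta(1)$-expander with high probability and has only $O(1)$ cycles shorter than $\tfrac13\log_{\Delta-1}n$ in expectation, so removing one vertex per short cycle produces girth $\Omega(\log_\Delta n)$ on $n - o(n)$ vertices; the price is that $\Delta$-regularity must be restored by a bounded-size completion gadget, which is routine but technical --- this is essentially what permits the ``every degree'' strength of \cite{alon2021high}.
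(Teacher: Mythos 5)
This statement is not proved in the paper at all: it is imported verbatim as Theorem~1.2 of \cite{alon2021high}, so there is no internal proof to compare against. Your proposal is a correct self-contained justification of the statement as it is actually stated (and as it is used later in Section~\ref{sec:lowerBound}): the LPS graphs $X^{p,q}$ are $(p+1)$-regular, connected, have $n=\Theta(q^3)$ vertices, girth at least $2\log_p q$ (resp.\ $4\log_p q-\log_p 4$ in the bipartite case), hence girth $\Omega(\log_\Delta n)$, and the easy direction of Cheeger together with the Ramanujan bound $\lambda_2\le 2\sqrt{\Delta-1}$ gives conductance at least $\tfrac12\bigl(1-2\sqrt{\Delta-1}/\Delta\bigr)$, an absolute constant for all $\Delta\ge 3$. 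Since the theorem only asks for infinitely many pairs $(\Delta,n)$, and the lower-bound argument downstream only needs $\Delta$ to be an arbitrarily large constant (at least twice the inverse conductance), restricting to degrees $\Delta=p+1$ with $p\equiv 1 \pmod 4$ prime is harmless. The difference from the cited source is one of strength rather than correctness: \cite{alon2021high} gives explicit constant-expansion, $\Omega(\log_\Delta n)$-girth graphs of \emph{every} degree and essentially every size, which is more than the statement records, whereas your route buys the needed special case from classical machinery (quaternionic girth analysis plus the Ramanujan--Petersson bound) that you correctly flag as black boxes; your probabilistic fallback would need the regularity-repair gadget spelled out, but it is not needed given the explicit family.
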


In our proof, we use the existential result behind the statement to refine a proof technique previously used by Englert and Räcke \cite{ER09} to give a lower bound on the competitive ratio of any $\ell_p$ oblivious routing scheme. Our refinement can also be used to strengthen their result by a $\Theta(\log\log n)$ factor.

In our proof, we crucially exploit the following facts about effective resistance. Recall that the effective resistance of a graph $G$ for a pair $(s,t) \in V^2$ is the minimum energy required to route one unit of demand from $s$ to $t$ in $G$, or alternatively the difference in voltages of $s$ and $t$ induced by routing this unit of demand via an electrical flow which is given by $\chi_{(s,t)} \matr{L}^+ \chi_{(s,t)}$. The facts below can be derived straightforwardly from Cheeger's Inequality, mixing of random walks, and characterization of effective resistance by commute times (see for example \cite{kyng2021advanced}).

\begin{fact}\label{fact:constEffResistance}
For $G$ being a constant-degree $\Omega(1)$-expander, we have that the effective resistance of any pair $(s,t) \in V$ is in $\Theta(1)$.
\end{fact}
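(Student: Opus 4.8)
Write $R_{\mathrm{eff}}(s,t) = \matr{\chi}_{(s,t)}^{\trp}\matr{L}^{+}\matr{\chi}_{(s,t)}$ for the quantity in question, and let $\Delta = O(1)$ be the (constant) degree. The plan is to sandwich $R_{\mathrm{eff}}(s,t)$ between two spectral bounds and then use Cheeger's inequality to convert the expansion hypothesis into a constant lower bound on the spectral gap. For a $\Delta$-regular graph the combinatorial Laplacian satisfies $\matr{L} = \Delta\,\mathcal{L}$, where $\mathcal{L}$ is the normalized Laplacian whose eigenvalues $0 = \nu_1 \leq \nu_2 \leq \dots \leq \nu_n$ all lie in $[0,2]$; hence $\matr{L}^{+} = \tfrac1\Delta\mathcal{L}^{+}$, and since $\matr{\chi}_{(s,t)} = \matr{1}_s - \matr{1}_t$ is orthogonal to the kernel vector $\matr{1}$, expanding in an orthonormal eigenbasis $\matr{u}_1 = \matr{1}/\sqrt n, \matr{u}_2, \dots$ of $\mathcal{L}$ gives
\[
\frac{\norm{\matr{\chi}_{(s,t)}}_2^{2}}{\Delta\,\nu_n} \;\leq\; R_{\mathrm{eff}}(s,t) \;=\; \frac{1}{\Delta}\sum_{i\geq 2}\frac{\langle \matr{\chi}_{(s,t)},\matr{u}_i\rangle^{2}}{\nu_i} \;\leq\; \frac{\norm{\matr{\chi}_{(s,t)}}_2^{2}}{\Delta\,\nu_2},
\]
with $\norm{\matr{\chi}_{(s,t)}}_2^{2} = 2$. (If $G$ is only bounded-degree, the same computation goes through after replacing $\matr{\chi}_{(s,t)}$ by $\matr{D}^{-1/2}\matr{\chi}_{(s,t)}$ and using $\matr{L} = \matr{D}^{1/2}\mathcal{L}\matr{D}^{1/2}$, where $\matr{D}$ is the diagonal degree matrix; then $\norm{\matr{D}^{-1/2}\matr{\chi}_{(s,t)}}_2^2 = \tfrac1{d_s}+\tfrac1{d_t} = \Theta(1)$.)

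\textbf{Lower bound.} Using $\nu_n \leq 2$ in the left inequality above immediately yields $R_{\mathrm{eff}}(s,t) \geq \tfrac{2}{2\Delta} = \tfrac1\Delta = \Omega(1)$, since $\Delta = O(1)$. Equivalently, one can argue combinatorially: by Rayleigh monotonicity, contracting $V\setminus\{s\}$ to a single node only decreases effective resistance, and in the contracted graph $s$ is joined to the supernode by $\Delta$ parallel unit-resistance edges, so $R_{\mathrm{eff}}(s,t) \geq 1/\Delta$. Either way this half is routine.

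\textbf{Upper bound.} The only nonroutine step is a constant lower bound on the spectral gap $\nu_2$. This is exactly Cheeger's inequality: $\nu_2 \geq \Phi^{2}/2$, where $\Phi = \min_{\vol(S)\leq \vol(V)/2}|\partial S|/\vol(S)$ is precisely the expansion parameter in the hypothesis. Since $G$ is an $\Omega(1)$-expander, $\Phi = \Omega(1)$, hence $\nu_2 = \Omega(1)$, and the right inequality above gives $R_{\mathrm{eff}}(s,t) \leq \tfrac{2}{\Delta\,\nu_2} = O(1)$. A second route to this bound, matching the hint in the text, goes through the commute-time identity $R_{\mathrm{eff}}(s,t) = C(s,t)/(2|E|)$: an $\Omega(1)$ spectral gap forces $O(\log n)$ mixing, so every hitting time — and hence $C(s,t)$ — is $O(n)$, while $|E| = \Delta n/2 = \Theta(n)$, giving $R_{\mathrm{eff}}(s,t) = O(1)$.

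\textbf{Main obstacle.} There is no serious difficulty: the statement is a couple of lines of eigenvalue bookkeeping once Cheeger's inequality is invoked. The only points requiring care are (i) the degree normalization relating $\matr{L}$ and $\mathcal{L}$ — trivial in the regular case used in the application, and a one-line $\matr{D}^{\pm1/2}$ conjugation in general — and (ii) citing the form of Cheeger's inequality stated in terms of conductance $\Phi$ rather than edge expansion, which I would take from the references already cited nearby (e.g.\ \cite{kyng2021advanced}).
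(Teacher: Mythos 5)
Your proposal is correct. Note, though, that the paper does not actually prove this Fact: it merely asserts that it "can be derived straightforwardly from Cheeger's Inequality, mixing of random walks, and characterization of effective resistance by commute times" and points to a reference. Your primary argument — sandwiching $\matr{\chi}_{(s,t)}^{\trp}\matr{L}^{+}\matr{\chi}_{(s,t)}$ between $2/(\Delta\nu_n)$ and $2/(\Delta\nu_2)$, then using $\nu_n\leq 2$ for the lower bound and Cheeger ($\nu_2\geq\Phi^2/2$, with $\Phi$ exactly the paper's conductance-style expansion) for the upper bound — is a more direct and self-contained route that dispenses with random walks entirely, while your "second route" via the commute-time identity is essentially the derivation the paper gestures at. Both of your halves are sound, including the Rayleigh-monotonicity alternative for the lower bound. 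One small point worth making explicit in the non-regular aside: $\matr{L}^{+}\neq \matr{D}^{-1/2}\mathcal{L}^{+}\matr{D}^{-1/2}$ as a matrix identity in general, but the quadratic-form identity $\matr{\chi}^{\trp}\matr{L}^{+}\matr{\chi}=(\matr{D}^{-1/2}\matr{\chi})^{\trp}\mathcal{L}^{+}(\matr{D}^{-1/2}\matr{\chi})$ does hold for any demand $\matr{\chi}\perp\matr{1}$ (e.g.\ via the variational characterization of the pseudo-inverse), which is all you use; this matters because the Fact is invoked in the paper for graphs such as $G_{\Delta,n}\setminus\{e\}$ that are bounded-degree but not exactly regular.
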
 
\begin{fact}\label{fact:unionOfExpandersForRouting}
Given two constant-degree graphs $G$ and $H$ over the same vertex set $V$. If the effective resistance for a pair $(s,t) \in V^2$ is in $\Theta(1)$ in both $G$ and $H$, then the electrical flow routing one unit of demand from $s$ to $t$ on the union of graphs $G \cup H$ sends at least a constant fraction of the flow over $G$ and a constant fraction of the flow over $H$. 
\end{fact}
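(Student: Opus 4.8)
The plan is to avoid decomposing the flow and instead work entirely with the single voltage vector $\matr{v} = \matr{L}_{G\cup H}^+ \chi_{(s,t)}$ of the electrical flow $\matr{f}^\star = \matr{A}_{\mathcal{E}}\chi_{(s,t)}$ in $G\cup H$ (recall $\matr{L}_{G\cup H} = \matr{L}_G + \matr{L}_H$). The key point is that this potential is shared by both graphs: the flow on any edge $(a,b)$ equals $w_{ab}(\matr{v}(a)-\matr{v}(b))$ whether we regard that edge as belonging to $G$ or to $H$. Hence the energy $\matr{f}^\star$ dissipates on $G$'s edges is exactly $\mathcal{E}_G := \matr{v}^\trp \matr{L}_G \matr{v}$, on $H$'s edges exactly $\mathcal{E}_H := \matr{v}^\trp \matr{L}_H \matr{v}$, and these split the total energy:
\[
\mathcal{E}_G + \mathcal{E}_H = \matr{v}^\trp \matr{L}_{G\cup H}\matr{v} = \matr{v}^\trp \chi_{(s,t)} = R^\star,
\]
where $R^\star$ denotes the effective resistance of $(s,t)$ in $G\cup H$, which also equals $\mathcal{E}(\matr{f}^\star)$ and $\matr{v}(s)-\matr{v}(t)$. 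I will read ``a constant fraction of the flow goes over $G$'' as $\mathcal{E}_G = \Omega(R^\star)$, and symmetrically $\mathcal{E}_H = \Omega(R^\star)$; once I also show $R^\star = \Theta(1)$, both say a constant fraction of the flow's energy is dissipated inside each of the two graphs. This is equivalent, via the ``integrate over voltage-threshold cuts'' identity $\int_{\matr{v}(t)}^{\matr{v}(s)} F_G(\tau)\,d\tau = \mathcal{E}_G$ (where $F_G(\tau)$ is the amount of $\matr{f}^\star$ crossing the cut $\{u : \matr{v}(u)\ge\tau\}$ on $G$-edges, and $F_G(\tau)+F_H(\tau)\equiv 1$ since every such cut separates $s$ from $t$), to the statement that, averaged over these threshold cuts, a constant fraction of the unit $s$--$t$ flow crosses on $G$-edges and a constant fraction on $H$-edges — the form I expect to be most convenient downstream.

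First I would establish $R^\star = \Theta(1)$. The upper bound is immediate: routing all of the flow inside $G$ is a feasible flow in $G\cup H$, so by Thomson's principle $R^\star \le R^G_{st} = \Theta(1)$ by Fact~\ref{fact:constEffResistance} (and symmetrically $R^\star \le R^H_{st}$). For the lower bound I would use that $\matr{v}$ attains its maximum at $s$ (as $\matr{v}$ is harmonic off $\{s,t\}$ with a source at $s$), so the $\le \deg_{G\cup H}(s) = O(1)$ edges incident to $s$ each carry a nonnegative amount of flow and together carry exactly one unit; Cauchy--Schwarz then gives $R^\star = \mathcal{E}(\matr{f}^\star) \ge \sum_{e \ni s}\matr{f}^\star(e)^2 \ge 1/\deg_{G\cup H}(s) = \Omega(1)$. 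This is the only place the bounded-degree hypothesis enters, and it is essential — without it the flow can be absorbed almost entirely by whichever of $G,H$ has slightly smaller effective resistance, and no constant lower bound is possible.

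Next I would lower-bound $\mathcal{E}_G$ and $\mathcal{E}_H$ using the dual (Dirichlet/voltage) characterization of effective resistance, $R^X_{st} = \max_{\matr{x}}\frac{(\matr{x}(s)-\matr{x}(t))^2}{\matr{x}^\trp \matr{L}_X \matr{x}}$, valid for $X \in \{G,H\}$ since both graphs are connected. Plugging in the test vector $\matr{x} = \matr{v}$ gives
\[
R^X_{st} \;\ge\; \frac{(\matr{v}(s)-\matr{v}(t))^2}{\matr{v}^\trp \matr{L}_X \matr{v}} \;=\; \frac{(R^\star)^2}{\mathcal{E}_X}, \qquad\text{hence}\qquad \mathcal{E}_X \;\ge\; \frac{(R^\star)^2}{R^X_{st}} \;=\; \Omega(1),
\]
using $R^\star = \Omega(1)$ and $R^X_{st} = O(1)$. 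Together with $\mathcal{E}_G + \mathcal{E}_H = R^\star = O(1)$ this gives $\mathcal{E}_G, \mathcal{E}_H = \Theta(R^\star)$, which is the claim. (As a consistency check, summing the two inequalities $\mathcal{E}_X \ge (R^\star)^2/R^X_{st}$ recovers the familiar ``conductances add'' bound $1/R^\star \ge 1/R^G_{st} + 1/R^H_{st}$.)

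The obstacle this plan is designed to circumvent is the following. One might try to split $\matr{f}^\star$ into its $G$-part and $H$-part and reason about flow values via flow decomposition, but the $G$-part need not route a clean $(s,t)$ demand inside $G$: flow can enter a shared vertex on a $G$-edge and leave it on an $H$-edge, so there is genuine ``leakage'' between the two graphs at internal vertices, which makes any path/value bookkeeping awkward and the constants hard to control. Passing to the shared potential $\matr{v}$ and the additive energy identity removes the leakage from the picture entirely — every quantity above is a quadratic form in $\matr{v}$ — and reduces the whole statement to the elementary bound $R^\star = \Omega(1)$ of the second step, which is exactly where the constant-degree hypothesis does its work.
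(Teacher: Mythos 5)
Your proof is correct. Note that the paper never actually proves this Fact --- it only asserts that it ``can be derived straightforwardly from Cheeger's Inequality, mixing of random walks, and characterization of effective resistance by commute times'' --- so your argument is a genuinely different, and fully self-contained, route. Instead of going through commute times or random-walk mixing, you use that the electrical flow in $G \cup H$ is induced by a single potential $\matr{v}$, so its energy splits additively as $\matr{v}^\trp \matr{L}_G \matr{v} + \matr{v}^\trp \matr{L}_H \matr{v} = R^\star$; then (i) $R^\star = O(1)$ by Rayleigh monotonicity and $R^\star \geq 1/\deg_{G\cup H}(s) = \Omega(1)$ by Cauchy--Schwarz over the $O(1)$ edges at $s$ (the only place bounded degree is used, and it is indeed needed there), and (ii) each summand is at least $(R^\star)^2 / R^X_{st} = \Omega(1)$ by plugging $\matr{v}$ as a test vector into the Dirichlet (dual) characterization of effective resistance in $X \in \{G,H\}$. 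This is clean and quantitative (summing your two inequalities recovers $1/R^\star \geq 1/R^G_{st} + 1/R^H_{st}$), and it sidesteps the leakage issue that makes a direct flow-decomposition argument awkward. The one interpretive point worth flagging is that ``a constant fraction of the flow'' in the Fact is informal; you formalize it as a constant fraction of the dissipated energy, equivalently an $\Omega(1)$ amount of flow crossing an average voltage-threshold cut on $G$-edges (resp.\ $H$-edges). This matches how the Fact is deployed in the paper: in the claim where it is applied with $H$ being the single edge $e = (s,t)$, the ratio $\matr{v}^\trp \matr{L}_H \matr{v} / R^\star$ is exactly the flow value on $e$, so your conclusion $\matr{v}^\trp \matr{L}_G \matr{v} = \Omega(R^\star)$ is precisely the statement that a constant fraction of the unit demand avoids $e$, which is what that claim needs.
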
 

Let us now give a lower bound for any $p \geq 2$ and any parameter $\Phi \in [1/n, 1]$ such that $1/\Phi$ is integer. We start by considering the electrical routing $\matr{A}_{\mathcal{E}}$ for a large constant $\Delta$ and any $n$ and $G_{\Delta, n}$ of the multi-commodity demand that is given by routing for each edge $e = (u,v)$ in $G_{\Delta, n}$ one unit of a commodity from $u$ to $v$, i.e. we consider the demand $\matr{\chi} = \left\{ \matr{\chi}_{(u,v)} \right\}_{e = (u,v) \in E(G_{\Delta, n})}$. Towards understanding the electrical routing, we prove the following simple claim.

\begin{claim}
For any edge $e$ in $G_{\Delta, n}$ where $\Delta$ is a large constant, we have that the electrical flow $\matr{f} = \matr{A}_{\mathcal{E}} \matr{\chi}_{(u,v)}$ routing the demand $ \matr{\chi}_{(u,v)}$ has $\| \matr{f} \|_1 = \Omega(\log n)$. 
\end{claim}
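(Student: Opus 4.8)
The plan is to show that a constant fraction of the unit $u$-$v$ electrical flow must be carried \emph{without using the edge} $e$, and then to argue that any flow routing a constant amount of $u$-$v$ demand through $G_{\Delta,n}-e$ is forced to be long by the girth lower bound of Theorem~\ref{thm:constructDetExpander}.

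Write $G := G_{\Delta,n}$, let $\matr{v} = \matr{L}^+\matr{\chi}_{(u,v)}$ be the induced voltages, and (orienting $e$ from $u$ to $v$) let $f_e = \matr{v}(u) - \matr{v}(v) = \matr{\chi}_{(u,v)}^\trp\matr{L}^+\matr{\chi}_{(u,v)}$ be the amount of $\matr{f} = \matr{A}_\mathcal{E}\matr{\chi}_{(u,v)}$ carried by $e$; note $f_e = R_{\mathrm{eff}}^G(u,v) \in (0,1]$. The first and key step is to show $f_e \le 1 - c$ for an absolute constant $c > 0$. Since $\Delta$ is a large constant, deleting the single edge $e$ from the $\Delta$-regular $\Theta(1)$-expander $G$ leaves a bounded-degree graph $G - e$ that is still a $\Theta(1)$-expander (for $n$ large, as $|\partial S|$ drops by at most $1$ and volumes change by at most $2$), so Fact~\ref{fact:constEffResistance} applied to $G - e$ gives $R_{\mathrm{eff}}^{G-e}(u,v) = \Theta(1)$. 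Viewing $G$ as the union of the single-edge graph $\{e\}$ (on which $(u,v)$ has effective resistance exactly $1 = \Theta(1)$) and $G - e$, Fact~\ref{fact:unionOfExpandersForRouting} shows that a constant fraction of the electrical flow avoids $e$, i.e. $1 - f_e \ge c$. (Equivalently, by the parallel-resistance formula $R_{\mathrm{eff}}^G(u,v) = R_{\mathrm{eff}}^{G-e}(u,v)\big/\bigl(1 + R_{\mathrm{eff}}^{G-e}(u,v)\bigr)$, which is bounded away from $1$.)

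Next, let $\matr{g}$ be the restriction of $\matr{f}$ to $E(G)\setminus\{e\}$. Since $\matr{B}(f_e\,\mathbbm{1}_e) = f_e\,\matr{\chi}_{(u,v)}$, we get $\matr{B}\matr{g} = (1 - f_e)\matr{\chi}_{(u,v)}$, so $\matr{g}$ is a flow routing $F := 1 - f_e \ge c$ units from $u$ to $v$ inside $G - e$, and trivially $\|\matr{f}\|_1 \ge \|\matr{g}\|_1$. Taking the potential $\phi(\cdot) = \mathrm{dist}_{G-e}(u,\cdot)$, which is $1$-Lipschitz on $G - e$ (so every entry of $\matr{B}^\trp\phi$ restricted to $E(G)\setminus\{e\}$ has absolute value at most $1$), we get $\|\matr{g}\|_1 \ge |\matr{g}^\trp\matr{B}^\trp\phi| = F\,|\phi(u) - \phi(v)| = F\cdot\mathrm{dist}_{G-e}(u,v)$. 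Finally, any simple $u$-$v$ path in $G - e$ together with $e$ is a cycle in $G$, so $\mathrm{dist}_{G-e}(u,v) \ge \mathrm{girth}(G) - 1$, which is $\Omega(\log_\Delta n) = \Omega(\log n)$ by Theorem~\ref{thm:constructDetExpander} since $\Delta = O(1)$. Combining, $\|\matr{f}\|_1 \ge F\cdot(\mathrm{girth}(G) - 1) \ge c\cdot\Omega(\log n) = \Omega(\log n)$, as claimed.

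The main obstacle is the first step: showing the flow on $e$ is bounded away from $1$, i.e. that the electrical flow genuinely spreads beyond the direct edge — this is precisely where Facts~\ref{fact:constEffResistance} and~\ref{fact:unionOfExpandersForRouting} are used, and the only subtlety is verifying that deleting one edge from a large-degree regular expander preserves $\Theta(1)$ expansion so that Fact~\ref{fact:constEffResistance} applies to $G - e$. Everything afterwards — the $\ell_1$-versus-distance inequality and the cycle/girth argument — is routine.
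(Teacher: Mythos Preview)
Your proof is correct and follows essentially the same approach as the paper: first show via Facts~\ref{fact:constEffResistance} and~\ref{fact:unionOfExpandersForRouting} (applied to $G-e$ and the single edge $e$) that a constant fraction of the electrical flow avoids $e$, and then use the girth bound to conclude that this fraction contributes $\Omega(\log n)$ to $\|\matr{f}\|_1$. Your second step is spelled out more carefully via the Lipschitz potential $\phi=\mathrm{dist}_{G-e}(u,\cdot)$, whereas the paper argues this informally by noting every $u$-$v$ path avoiding $e$ has length $\Omega(\log n)$, but the content is the same.
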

\begin{proof}
The claim follows from showing that $\matr{f}(e)$ carries only $(1-\epsilon)$ units of flow for some constant $\epsilon > 0$. This is because it implies that a constant fraction of the flow is not routed via the edge $e$. But since each path between the endpoints of $e$ that does not use the edge $e$ is of length $\Omega(\log n)$ (by the girth bound in Theorem \ref{thm:constructDetExpander}), we have that this $\epsilon$-fraction adds $\Omega(\epsilon \log n) = \Omega(\log n)$ units of flow to the network $G_{\Delta, n}$.

To prove the claim, it suffices to observe that the graph $G_{\Delta, n} \setminus \{e\}$ is a $\Omega(1)$-expander. But to this end, it suffices to observe that since the conductance of $G_{\Delta, n}$ does not depend on $\Delta$ by Fact \ref{fact:constEffResistance}, by choosing $\Delta$ sufficiently large (i.e. at least twice the inverse of the conductance), we have that each cut contains at least $2$ edges and thus the conductance of $G_{\Delta, n} \setminus \{e\}$ is at least half of the conductance of $G_{\Delta, n}$, and thus still constant.

Using that the trivial graph consisting only of the edge $e$ is a constant-degree $\Omega(1)$-expander, we thus have that the effective resistance of the pair $(u,v)$ in both the graph $G_{\Delta, n} \setminus e$ and $e$ is constant by Fact \ref{fact:constEffResistance}. Thus, by Fact \ref{fact:unionOfExpandersForRouting}, we have that a constant fraction of the demand $\matr{\chi}_{(u,v)}$ is not routed into $e$, as desired.
\end{proof}

Using that multi-commodity flows do not cancel, we thus have that each edge in $G_{\Delta, n}$ carries on average $\Omega(\log n)$ units of flow. We next transform the graph $G_{\Delta, n}$ to then obtain our final gadget on which we can prove the lower bound.

\begin{definition}
Let $G^{\Phi}_{\Delta, n}$ be the graph obtained from $G_{\Delta, n}$ by replacing each edge with $1/\Phi$ vertex-disjoint paths of length $1/\Phi$ between the endpoints of the vertices. Thus, $G^{\Phi}_{\Delta, n}$, for $\Delta$ being a constant, has $\Theta(n/\Phi^2)$ vertices. 
\end{definition}

Next, we claim that the effective resistance of our demand pairs is the same up to a constant in $G_{\Delta, n}$ and $G^{\Phi}_{\Delta, n}$. 

\begin{claim}\label{clm:gadgetAlsoHasGoodRes}
For each edge $(u,v) \in E(G_{\Delta, n})$, the effective resistance of the pair $(u,v)$ in the graph $G^{\Phi}_{\Delta, n}$ is $\Theta(1)$.
\end{claim}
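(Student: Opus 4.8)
The plan is to show that, electrically, $G^{\Phi}_{\Delta,n}$ is indistinguishable from $G_{\Delta,n}$ itself, from which the claim is immediate via Fact~\ref{fact:constEffResistance}. The starting observation is that the gadget replacing a single edge $(a,b)\in E(G_{\Delta,n})$ --- namely $1/\Phi$ vertex-disjoint paths, each consisting of $1/\Phi$ unit resistors in series --- has two-terminal effective resistance exactly $1$: each path has resistance $1/\Phi$, and $1/\Phi$ of them in parallel give $\frac{1/\Phi}{1/\Phi}=1$ (here we use that $1/\Phi$ is a positive integer). First I would record this computation.

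Next I would invoke the standard two-terminal reduction principle for resistor networks: if a subnetwork attaches to the rest of the network only at two vertices $a,b$ (every other vertex of the subnetwork having no edge leaving it), then replacing that subnetwork by a single resistor between $a$ and $b$ of resistance equal to its $a$--$b$ effective resistance changes no potentials, flows, or effective resistances outside the subnetwork. In $G^{\Phi}_{\Delta,n}$ each gadget is exactly such a subnetwork: its internal vertices have degree $2$ and, since the paths are vertex-disjoint and distinct original edges receive disjoint gadgets, they belong to no other gadget. Applying the reduction to all gadgets simultaneously shows that $G^{\Phi}_{\Delta,n}$ is electrically equivalent, on the vertex set $V(G_{\Delta,n})$, to the unit-resistance graph $G_{\Delta,n}$. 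If one prefers to avoid quoting the reduction principle, the same conclusion follows from Thomson's principle: in the energy-minimizing unit $u$--$v$ flow on $G^{\Phi}_{\Delta,n}$, symmetry forces the flow through each gadget to split evenly among its $1/\Phi$ parallel paths and to be constant along each path, so collapsing each gadget to one unit-resistance edge is an energy-preserving bijection between flows on the two networks.

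Finally, this equivalence gives $R_{\mathrm{eff}}^{G^{\Phi}_{\Delta,n}}(x,y) = R_{\mathrm{eff}}^{G_{\Delta,n}}(x,y)$ for every pair $x,y$, and in particular for the endpoints $u,v$ of an edge of $G_{\Delta,n}$. Since $G_{\Delta,n}$ is $\Delta$-regular with $\Delta$ a constant and is an $\Omega(1)$-expander, Fact~\ref{fact:constEffResistance} gives $R_{\mathrm{eff}}^{G_{\Delta,n}}(u,v) = \Theta(1)$, hence $R_{\mathrm{eff}}^{G^{\Phi}_{\Delta,n}}(u,v) = \Theta(1)$, as claimed.

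I do not expect a genuine obstacle here; the only step needing care is a clean justification of the two-terminal reduction (equivalently, the safe elimination of the gadgets' internal degree-$2$ vertices), and the one-line symmetry argument via Thomson's principle sketched above makes this fully rigorous, so the claim should be routine.
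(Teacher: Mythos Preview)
Your argument is correct. Both you and the paper exploit the same underlying observation---that each gadget replacing an edge has two-terminal effective resistance exactly~$1$---but you package it differently. The paper proves the $O(1)$ upper bound by explicitly mapping the optimal electrical flow in $G_{\Delta,n}$ to a flow in $G^{\Phi}_{\Delta,n}$ (splitting each edge's flow evenly over the $1/\Phi$ paths) and checking that the energy is unchanged, then asserts the matching $\Omega(1)$ lower bound via the inverse mapping. You instead invoke the series--parallel / two-terminal reduction principle once to conclude that $G^{\Phi}_{\Delta,n}$ and $G_{\Delta,n}$ are electrically \emph{identical} on $V(G_{\Delta,n})$, obtaining exact equality of effective resistances in one stroke and then appealing to Fact~\ref{fact:constEffResistance}. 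Your route is a bit cleaner and yields a stronger intermediate statement (exact equality rather than two-sided bounds); the paper's explicit energy computation is essentially a hands-on proof of the same reduction in this special case.
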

\begin{proof}
To show this result, we give an explicit mapping of the electrical flow routing $\matr{\chi_{(u,v)}}$ in $G_{\Delta, n}$ to routing the flow in $G^{\Phi}_{\Delta, n}$ whose energy is at most constant. Let $\matr{f}$ be this electrical flow routing on $G_{\Delta, n}$, then we map the flow on each edge $e'$ in $G_{\Delta, n}$ uniformly through the $1/\Phi$ disjoint paths between the endpoints of $e'$ in $G^{\Phi}_{\Delta, n}$. Since each path now routes only a $\Phi$-fraction of the original flow on the edge $e'$, we have that the energy used to route through each edge on the disjoint paths replacing $e'_{middle}$ is $(\matr{f}(e') \Phi)^2 = \matr{f}(e')^2  \Phi^2$. We thus have that the energy incurred by routing through the $1/\Phi$ disjoint paths each consisting of $1/\Phi$ edges is $1/\Phi^2 \cdot \matr{f}(e')^2 \Phi^2 = \matr{f}(e')^2$. Thus, the effective resistance of $(u,v)$ in  $G^{\Phi}_{\Delta, n}$ is at most the resistance in $G_{\Delta, n}$ which implies it is in $O(1)$. 

A lower bound of $\Omega(1)$ is observed by inversing this mapping to collect the amount of flow pushed through the disjoint paths replacing edge $e'$ together and adding it to $e'$ in $G_{\Delta, n}$. The proof is straightforward and therefore omitted.
\end{proof}

Before we can carry out the proof of our lower bound, it remains to show for our lower bound gadget which is the graph  $G = G_{\Delta, n} \cup G^{\Phi}_{\Delta, n}$ that it is a $\Theta(\Phi)$-expander. 

\begin{claim}\label{clm:expanderAndFewEdges}
$G$ is $\Theta(\Phi)$-expander with $\Theta(n/\Phi^2)$ edges.
\end{claim}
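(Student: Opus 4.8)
My plan is to handle the edge count and the expansion bound separately, since the edge count is a trivial consequence of the construction and the expansion bound is where the work lies.

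For the edge count: $G_{\Delta,n}$ has $\Theta(n)$ edges since $\Delta$ is constant, and $G^{\Phi}_{\Delta,n}$ replaces each of these $\Theta(n)$ edges by $1/\Phi$ paths of length $1/\Phi$, contributing $\Theta(1/\Phi^2)$ edges per original edge, hence $\Theta(n/\Phi^2)$ edges total. The union $G = G_{\Delta,n} \cup G^{\Phi}_{\Delta,n}$ therefore has $\Theta(n/\Phi^2)$ edges. Note also that $G_{\Delta,n}$ and $G^{\Phi}_{\Delta,n}$ share only the original vertex set of $G_{\Delta,n}$; $G^{\Phi}_{\Delta,n}$ additionally has the internal path vertices, so $G$ has $\Theta(n/\Phi^2)$ vertices.

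For the expansion bound $\Phi(G) = \Theta(\Phi)$, I would argue both directions. For the upper bound $\Phi(G) = O(\Phi)$: take a single internal path of length $1/\Phi$ replacing some edge $e$, and let $S$ be (roughly) half of the internal vertices of that path, so $\vol(S) = \Theta(1/\Phi)$ while $|\partial S|$ is $O(1)$ (just the two or so edges where $S$ meets the rest of the path plus possibly the original-vertex side); this gives a cut of conductance $O(\Phi)$. For the lower bound $\Phi(G) = \Omega(\Phi)$: let $S \subseteq V(G)$ with $\vol(S) \le \vol(V(G))/2$. Split $S$ into $S_0 = S \cap V(G_{\Delta,n})$ (original vertices) and the internal path vertices. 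The key observation is that $G^{\Phi}_{\Delta,n}$ is exactly the graph $\hat{G}$ obtained from $G_{\Delta,n}$ (with unit capacities and unit lengths after the path-subdivision, or equivalently with capacity $1/\Phi$ and length $1/\Phi$ per edge) in the sense of Corollary~\ref{cor:informaloblp}'s construction; the conductance of such a subdivided/parallel graph relates to the conductance of $G_{\Delta,n}$ by a factor of $\Theta(\Phi)$. Concretely: a cut in $G^{\Phi}_{\Delta,n}$ that cuts $k$ of the original edges' path-bundles has boundary weight $\ge k$ (at least one edge per cut path in each of the $1/\Phi$ paths would be $k/\Phi$, but even cutting one path per bundle gives $\ge k$), and by the girth/expansion of $G_{\Delta,n}$ the volume on the smaller side is $O(k/\Phi \cdot 1/\Phi \cdot 1/\Phi_{const}) $... so the conductance is $\Omega(\Phi \cdot \Phi_{const}) = \Omega(\Phi)$. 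Since $G \supseteq G^{\Phi}_{\Delta,n}$ as a subgraph on a superset-equal vertex set but $G$ has strictly more edges (the original $G_{\Delta,n}$ edges), adding edges only increases $|\partial S|$ while keeping $\vol$ comparable (volumes change by at most a constant factor since $\Delta$ is constant and each original vertex gains $\Delta$ to its degree, which is dominated by the $\Theta(1/\Phi) \cdot \Delta$ it already has from the path bundles, wait—original vertices have degree $\Theta(\Delta/\Phi)$ in $G^{\Phi}_{\Delta,n}$ and gain only $\Delta$ more in $G$, so volumes are preserved up to constants). Hence $\Phi(G) = \Omega(\Phi)$.

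The main obstacle I anticipate is the careful volume bookkeeping in the lower-bound direction: one must verify that when $S$ contains a "mixed" collection of partial path-bundles and original vertices, the boundary is still $\Omega(\Phi)$ times the volume. The cleanest way is probably to reduce to the conductance of $G_{\Delta,n}$ directly — show that any cut in $G^{\Phi}_{\Delta,n}$ can be "rounded" so that each path bundle is either entirely on one side or contributes boundary proportional to its volume, thereby inducing a cut in $G_{\Delta,n}$ whose conductance is within a $\Theta(\Phi)$ factor, and then invoke $\Phi(G_{\Delta,n}) = \Theta(1)$ from Theorem~\ref{thm:constructDetExpander}. I would also double-check the edge case where $S$ lies entirely inside internal path vertices (handled by the $O(\Phi)$ upper-bound cut above, and for the lower bound such an $S$ has $|\partial S| \ge 2$ against volume $O(1/\Phi)$, giving conductance $\Omega(\Phi)$ as long as the path has length $\le 1/\Phi$, which it does).
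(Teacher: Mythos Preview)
Your proposal is correct and follows essentially the same approach as the paper: the edge count is immediate from the construction, the $O(\Phi)$ upper bound comes from cutting out (part of) a single replacement path, and the $\Omega(\Phi)$ lower bound is obtained by rounding an arbitrary cut so that internal path vertices are assigned to one side, thereby reducing to the $\Omega(1)$-expansion of $G_{\Delta,n}$. The paper's own proof is in fact terser than yours---it simply asserts the rounding step and defers the rest to ``not hard to show''---so your added bookkeeping (handling the case where $S$ lies entirely in internal path vertices, and checking that adding the $G_{\Delta,n}$ edges changes volumes only by constant factors since original vertices already have degree $\Theta(\Delta/\Phi)$ in $G^{\Phi}_{\Delta,n}$) is a welcome elaboration rather than a deviation.
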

\begin{proof}
The number of edges is straightforward from our construction of $G$. To see that $G$ is $O(\Phi)$-expander, observe that we can take the internal vertices of any path in $G^{\Phi}_{\Delta, n}$ replacing an edge in $G_{\Delta, n}$ which has volume $\Omega(1/\Phi)$ but only two edges leaving (the once to the endpoints of the replaced edges). To observe that it is an $\Omega(\Phi)$-expander, it suffices to show that each cut in $S$ is maximized by assigning all internal vertices of each such path to one side of the cut. It is then not hard to show from $G_{\Delta, n}$ being a $\Omega(1)$-expander that the claim follows.
\end{proof}

Let us now give the proof of the main result. We take the graph under consideration to be $G = G_{\Delta, n} \cup G^{\Phi}_{\Delta, n}$. We take as demand, the vector $\matr{\chi}^{\Phi} = \frac{1}{\Phi} \cdot \matr{\chi} = \frac{1}{\Phi} \cdot \{ \matr{\chi}_{(u,v)} \}_{e = (u,v) \in E(G_{\Delta, n})}$. Let $\matr{A}_{\mathcal{E}}$ denote the electrical flow routing on this graph $G$. Let us look at each edge $e = (u,v) \in  E(G_{\Delta, n})$. From Fact \ref{fact:constEffResistance}, Claim \ref{clm:gadgetAlsoHasGoodRes} and Fact \ref{fact:unionOfExpandersForRouting}, we have that the flow $\matr{f}_e = \matr{A}_{\mathcal{E}} \cdot \frac{1}{\Phi} \matr{\chi}_{(u,v)}$ restricted to the edges in $E(G_{\Delta, n})$ routes in total at least $\log n/ \phi$ units of flow along all of these edges. By linearity of $\matr{A}_{\mathcal{E}}$ and the fact that flows do not cancel, we have that when routing $\matr{\chi}^{\Phi}$, an average edge in $E(G_{\Delta, n})$ carries $\Omega(\log n/ \Phi)$ units of flow. Thus, the $\ell_p$-norm of this flow is at least $\sqrt[p]{n \cdot (\log n/ \Phi)^p} = n^{1/p} \cdot \log n \cdot \Phi$. But observe that we can route the flow with demand with congestion $1$ in  $G^{\Phi}_{\Delta, n}$ by routing for each demand $\matr{\chi}_{(u,v)}$ exactly $1$ unit of flow through each of the disjoint paths corresponding to the edge $e' = (u,v)$ in $G^{\Phi}_{\Delta, n}$. The $\ell_p$-norm of this flow is $\Theta((n /\Phi^2)^{1/p}) = \Theta(n^{1/p} \Phi^{-2/p})$ (using Claim \ref{clm:expanderAndFewEdges}). We thus have that $\rho_p(\matr{A}_{\mathcal{E}}) = \Omega(\log n \cdot \Phi^{p/(p-2)})$.

To obtain the result for $p < 2$, we use that for $q$ given by $1/p + 1/q = 1$, we have $\norm{\matr{X}}_{p \to p} = \norm{\matr{X^{\trp}}}_{q \to q}$, and for the electrical routing, $\matr{A}_{\mathcal{E}} \matr{B} = (\matr{A}_{\mathcal{E}} \matr{B})^{\trp}$ since $\matr{L}^+$ is symmetric. 

We note that in the construction above the number of vertices in the final graph $G$ might be much larger than $n$. By considering all possible parameters for $\Phi$ in $[1/n, 1]$ (i.e. all such numbers were $1/\Phi$ is integer), we obtain a family of $n'$-vertex graphs with conductances in $[1, 1/\sqrt[3]{n'}]$, as claimed. Since every $\Phi$-expander is also a $\Phi'$-expander for every $\Phi' \leq \Phi$, we do further not need to restrict the domain of $\Phi'$ further than in range. We point out that by considering parameters $\Phi$ in our construction that are even smaller than $1/n$, one can get up to an arbitrarily small polynomial factor close to conductances as small as $1/ \sqrt{n'}$. 

\vspace*{\fill}

\ifblind
\paragraph{Acknowledgement.} We would like to thank Yang P. Liu for pointing us to the Riesz-Thorin theorem.
\fi

\pagebreak

\begin{refcontext}[sorting=nyt]
\printbibliography[heading=bibintoc]
\end{refcontext}

\pagebreak

\appendix

\section{Appendix -- Congestion for Monotonic Norms}
\label{sec:congestionAbsNorm}

This sections studies the competitive bound under more general norms, namely monotonic norms. 
We prove an identity for the competitive ratio in this case, which ultimately yields a way to efficiently compute it.
We include these bounds for completeness mainly.
The core ideas already appear in \cite{ER09,KLOS}.

We start by defining a much broader group of norms than previously considered when analyzing competitive bounds.

\begin{definition}
A norm \( \lVert \cdot \rVert \colon \R^n \to \R\) is called monotonic if for every \(\matr{x}, \matr{y} \in \R^n\) it holds that \(\lvert \matr{x} \rvert \leq \lvert \matr{y} \rvert \implies \lVert \matr{x} \rVert \leq \lVert \matr{y} \rVert\).
\end{definition}

\begin{definition}
    A norm \( \lVert \cdot \rVert \colon \R^n \to \R\) is called absolute if for every \(\matr{x}\in \R^n\) it holds that \(\lVert \matr{x} \rVert = \lVert \lvert \matr{x} \rvert \rVert\).
\end{definition}

\begin{fact}
\label{fact_monotonic_norms_are_absolute}
Every monotonic norm is also absolute.
\end{fact}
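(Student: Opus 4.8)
The plan is to show that for an arbitrary monotonic norm $\lVert \cdot \rVert$ and any $\matr{x} \in \R^n$, we have $\lVert \matr{x} \rVert = \lVert \lvert \matr{x} \rvert \rVert$. The key observation is that $\lvert \lvert \matr{x} \rvert \rvert = \lvert \matr{x} \rvert$ holds trivially (applying the entrywise absolute value twice changes nothing), so monotonicity is applicable in both directions between $\matr{x}$ and $\lvert \matr{x} \rvert$.

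First I would record the two pointwise inequalities $\lvert \matr{x} \rvert \leq \lvert \matr{x} \rvert$ and $\lvert \lvert \matr{x} \rvert \rvert = \lvert \matr{x} \rvert \leq \lvert \matr{x} \rvert$; these are the same inequality written two ways, namely $\lvert \matr{x} \rvert \le \lvert \matr{x}\rvert$, so that the hypothesis of monotonicity is met when comparing the pair $(\matr{x}, \lvert \matr{x} \rvert)$ as well as the pair $(\lvert \matr{x} \rvert, \matr{x})$. Applying the definition of monotonicity to the pair $(\matr{x}, \lvert \matr{x}\rvert)$ — i.e. with ``$\matr{x}$'' in the definition instantiated to our $\matr{x}$ and ``$\matr{y}$'' instantiated to $\lvert \matr{x} \rvert$, using $\lvert \matr{x} \rvert \le \lvert \lvert \matr{x}\rvert\rvert = \lvert \matr{x}\rvert$ — yields $\lVert \matr{x} \rVert \le \lVert \lvert \matr{x} \rvert \rVert$. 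Applying it instead with the roles swapped, using $\lvert \lvert \matr{x} \rvert \rvert = \lvert \matr{x} \rvert \le \lvert \matr{x} \rvert$, yields $\lVert \lvert \matr{x} \rvert \rVert \le \lVert \matr{x} \rVert$. Combining the two inequalities gives equality, which is exactly the absolute property.

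There is essentially no obstacle here; the only thing to be careful about is not to conflate the entrywise absolute value $\lvert \cdot \rvert$ with the norm and to state explicitly the idempotence $\lvert \lvert \matr{x} \rvert \rvert = \lvert \matr{x} \rvert$ that makes monotonicity usable symmetrically. The whole argument is two lines once that idempotence is spelled out.
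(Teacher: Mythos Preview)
Your argument is correct: the idempotence $\lvert\,\lvert \matr{x}\rvert\,\rvert = \lvert \matr{x}\rvert$ lets you apply the monotonicity hypothesis in both directions to obtain $\lVert \matr{x}\rVert \le \lVert\,\lvert \matr{x}\rvert\,\rVert$ and $\lVert\,\lvert \matr{x}\rvert\,\rVert \le \lVert \matr{x}\rVert$, hence equality. The paper itself does not supply a proof of this fact (it is stated without justification and then used), so there is nothing to compare against; your two-line argument is the standard one and is exactly what is needed.
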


The main result of this section is the Lemma below. We point out that it works for a much stronger notion of congestion that considers weights. In the rest of the paper, we use the unweighted version of this theorem but prove it to full generality as a reference. 

\begin{lemma}[Competitive ratio of monotonic norms]
\label{lemma_competitive_ratio_as_norm_monotonic}
Let \(G = (V, E, w)\) be a graph with positive weights. For any oblivious routing \(\matr{A}\) and a definition of congestion with a monotonic norm \(\lVert \cdot \rVert\), we have:
\[
\rho(\matr{A})
=
\lVert \lvert \matr{W}^{-1} \matr{A} \matr{B} \matr{W} \rvert \rVert.
\]
Here, we denote by $\| \matr{X} \|$ where $\matr{X}$ is a matrix, the matrix norm induced by the vector norm $\| \cdot \|$, i.e. $\| \matr{X} \| = \sup_{\matr{z} \neq 0} \frac{\| \matr{X} \matr{z}\|}{\|\matr{z}\|}$.
\end{lemma}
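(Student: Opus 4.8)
The plan is to prove the two inequalities $\rho(\matr{A}) \le \bigl\lVert\, \lvert \matr{W}^{-1}\matr{A}\matr{B}\matr{W}\rvert\, \bigr\rVert$ and $\rho(\matr{A}) \ge \bigl\lVert\, \lvert \matr{W}^{-1}\matr{A}\matr{B}\matr{W}\rvert\, \bigr\rVert$ separately. Write $\matr{M} = \lvert \matr{W}^{-1}\matr{A}\matr{B}\matr{W}\rvert \in \R_{\ge 0}^{E\times E}$, recall from Fact~\ref{fact_monotonic_norms_are_absolute} that $\lVert\cdot\rVert$ is absolute, and note that since $\matr{W}$ and $\matr{W}^{-1}$ are diagonal with positive entries they commute with the entrywise absolute value; in particular $\matr{M} = \matr{W}^{-1}\lvert\matr{A}\matr{B}\rvert\matr{W}$. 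I will use repeatedly the entrywise triangle inequality $\lvert\matr{X}\matr{z}\rvert \le \lvert\matr{X}\rvert\lvert\matr{z}\rvert$.

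For the upper bound, fix any finite multiset of valid demands $\{\matr{\chi}_i\}_{i\in[k]}$ and let $\{\matr{f}_i^*\}$ attain $\opt(\{\matr{\chi}_i\})$ (attainment holds because $\sum_i\lvert\matr{f}_i^*\rvert \ge \lvert\matr{f}_j^*\rvert$ entrywise, so the objective is coercive in each $\matr{f}_j^*$; alternatively one argues with a minimizing sequence). Since $\matr{B}\matr{A}\matr{\chi}_i = \matr{\chi}_i = \matr{B}\matr{f}_i^*$, we have $\matr{A}\matr{\chi}_i = \matr{A}\matr{B}\matr{f}_i^*$, hence $\matr{W}^{-1}\lvert\matr{A}\matr{\chi}_i\rvert = \lvert \matr{W}^{-1}\matr{A}\matr{B}\matr{W}\,(\matr{W}^{-1}\matr{f}_i^*)\rvert \le \matr{M}\,\lvert\matr{W}^{-1}\matr{f}_i^*\rvert$ entrywise; summing over $i$ gives $\matr{W}^{-1}\sum_i\lvert\matr{A}\matr{\chi}_i\rvert \le \matr{M}\sum_i\lvert\matr{W}^{-1}\matr{f}_i^*\rvert$ entrywise. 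Monotonicity of the norm then yields $\conge(\{\matr{A}\matr{\chi}_i\}) \le \lVert\matr{M}\rVert\cdot\bigl\lVert\sum_i\lvert\matr{W}^{-1}\matr{f}_i^*\rvert\bigr\rVert = \lVert\matr{M}\rVert\cdot\opt(\{\matr{\chi}_i\})$, and taking the maximum over demand multisets gives $\rho(\matr{A}) \le \lVert\matr{M}\rVert$.

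For the lower bound, the key point is that the competitive ratio may be witnessed by real-scaled edge demands. For any $\matr{c}\in\R_{\ge0}^E$, take the multiset consisting, for each edge $e=(a,b)\in E$, of the demand $\matr{c}(e)\cdot\matr{\chi}_{(a,b)}$; each such demand is valid and is routed by $\matr{A}$ to $\pm\matr{c}(e)\cdot\matr{A}\matr{B}\matr{1}_{\{e\}}$, so $\sum_{e}\lvert\matr{A}\matr{\chi}_e\rvert = \lvert\matr{A}\matr{B}\rvert\,\matr{c}$ as a vector on $E$. Routing $\matr{c}(e)$ units along the edge $e$ itself is a feasible multiset of flows certifying $\opt \le \lVert\matr{W}^{-1}\matr{c}\rVert$. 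Hence $\rho(\matr{A}) \ge \lVert\matr{W}^{-1}\lvert\matr{A}\matr{B}\rvert\matr{c}\rVert\big/\lVert\matr{W}^{-1}\matr{c}\rVert$, and substituting $\matr{c} = \matr{W}\matr{y}$ with $\matr{y}\ge0$ this becomes $\lVert\matr{M}\matr{y}\rVert/\lVert\matr{y}\rVert$. Taking the supremum over $\matr{y}\ge0$ and invoking the standard fact that, for a non-negative matrix and a monotonic (hence absolute) norm, the induced operator norm is attained on the non-negative orthant — because $\lvert\matr{M}\matr{z}\rvert \le \matr{M}\lvert\matr{z}\rvert$ and $\lVert\matr{z}\rVert = \lVert\lvert\matr{z}\rvert\rVert$ — we conclude $\rho(\matr{A}) \ge \lVert\matr{M}\rVert$, completing the proof.

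I do not expect a deep obstacle here; the work is in the bookkeeping with the diagonal weight matrices and the entrywise inequalities, and in two small lemmas about monotonic norms: (i) $\opt$ is attained (or a minimizing-sequence workaround), and (ii) reducing the operator norm of a non-negative matrix to non-negative test vectors. Point (ii) is the conceptual crux, as it is precisely what identifies the family $\{\matr{c}(e)\matr{\chi}_{(a,b)}\}_{e\in E}$ as an extremal instance and is the only place monotonicity of the norm is essential.
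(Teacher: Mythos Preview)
Your proof is correct. The lower bound (witness by scaled edge demands, then reduce the operator norm of a non-negative matrix to non-negative test vectors via monotonicity) is essentially the paper's argument, stated more compactly.

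Your upper bound, however, is genuinely different and cleaner than the paper's. The paper fixes an optimal routing $\{\matr{f}'_i\}$ of the given demands, then \emph{decomposes} each $\matr{\chi}_i$ into the single-edge demands $\{\matr{f}'_i(e)\,\matr{\chi}_e\}_{e\in E}$, argues that passing to this refined multiset $D'$ can only increase the congestion of $\matr{A}$ and only decrease $\opt$, proves separately that the single-edge routing is optimal for $D'$, and only then reads off the operator-norm expression. You bypass all of that with the one-line identity $\matr{A}\matr{\chi}_i = \matr{A}\matr{B}\matr{f}^*_i$ (valid for \emph{any} feasible $\matr{f}^*_i$, in particular the optimal one), followed by the entrywise bound $\lvert\matr{W}^{-1}\matr{A}\matr{B}\matr{W}\,\matr{x}\rvert \le \matr{M}\lvert\matr{x}\rvert$ and monotonicity. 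This is the same content, but your route avoids introducing $D'$ and the optimality-of-single-edge-routing lemma entirely; the paper's decomposition buys a slightly more explicit identification of the extremal instances, whereas your argument buys brevity and makes transparent that the upper bound needs nothing about $\matr{A}$ beyond $\matr{B}\matr{A}\matr{\chi}=\matr{\chi}$.
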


\begin{proof}
Let \(D = \{\matr{\chi}_1, \dots, \matr{\chi}_k\}\) be a multiset of demands, and let \(\{\matr{f}_1', \dots, \matr{f}_k'\}\) be a choice of flows that route the demands in \(D\) optimally.

We will now construct a new multiset of demands that ``forces'' the routing described by \(\{\matr{f}_1', \dots, \matr{f}_k'\}\) on each edge. Concretely, given a multiset of demands \(D\), define the multiset \(D' = \{\matr{\chi}_i^e \, | \, i \in [k], e \in E\}\), where each \(\matr{\chi}_i^e\) corresponds to the amount of flow sent by \(\matr{f}_i\) on edge \(e\): 
\[ \matr{\chi}_i^{(a, b)}(u) = 
\begin{cases}
-\matr{f}_i'(a, b), & \text{if } u = a \\
\matr{f}_i'(a, b), & \text{if } u = b \\
0, & \text{otherwise}
\end{cases} \implies \matr{\chi}_i^{(a, b)} = \matr{f}_i'{(a, b)} \cdot \matr{\chi}_{(a, b)}.\]
Note that for any \(i \in [k]\), the demands \(\{\matr{\chi}_i^e\}_{e \in E}\) reconstruct \(\matr{\chi}_i\), that is, \(\sum_{e \in E}{\matr{\chi}_i^e} = \matr{\chi}_i\). 

In the following, we will prove \(\conge\left(\{\matr{A} \matr{\chi}\}_{\matr{\chi} \in D}\right) \leq \conge(\left\{\matr{A} \matr{\chi}\}_{\matr{\chi} \in D'}\right)\) and \(\opt(D) \geq \opt(D')\). These two inequalities will allow us to restrict our search domain of demand multisets for the computation of \(\rho(\matr{A})\). The motivation behind this restriction will become apparent after their proof, after which it will be straightforward to establish the lemma.

Proving \( \conge\left( \{\matr{A} \matr{\chi}\}_{\matr{\chi} \in D}\right) \leq \conge \left( \{\matr{A} \matr{\chi}\}_{\matr{\chi} \in D'}\right) \) can be done by employing the absolute monotonicity of the norm as follows:
\begin{equation}
    \label{equation_cong_d_leq_cong_dp}
    \begin{aligned}
    \conge\left(\{\matr{A} \matr{\chi}\}_{\matr{\chi} \in D}\right) &= \left\lVert \matr{W}^{-1} \cdot \sum_{i = 1}^k{\left\lvert\matr{A} \matr{\chi}_i \right\rvert} \right\rVert \\
    &= \left\lVert \matr{W}^{-1} \cdot \sum_{i = 1}^k{\left\lvert \sum_{e \in E}{\matr{A} \matr{\chi}_i^e} \right\rvert} \right\rVert \\
    &\leq \left\lVert \matr{W}^{-1} \cdot \sum_{i = 1}^k{\sum_{e \in E}{\left\lvert \matr{A} \matr{\chi}_i^e \right\rvert}} \right\rVert \\
    &= \conge\left(\{\matr{A} \matr{\chi}\}_{\matr{\chi} \in D'}\right)
    \end{aligned}
\end{equation}
The second part, namely \(\opt(D) \geq \opt(D')\), can be proven by showing that based on \(\{\matr{f}_1', \dots, \matr{f}_k'\}\) we can build a multiset of flows that satisfy \(D'\), and which has the same congestion. To see this, define for every \(e \in E \) the matrix \(\matr{I}_e = \textrm{diag}(\matr{1}_e)\) as the matrix that preserves only the component corresponding to the index of edge \(e\) of a vector upon multiplication from the left. Thus, we can write:
\rasmus{I might take a pass at phrasing here if I have time}
\begin{equation}
\begin{aligned}
\label{equation_rewrite_opt}
    \opt(D) &= \left\lVert \matr{W}^{-1} \cdot \sum_{i = 1}^k{\lvert \matr{f}_i'\rvert} \right\rVert \\
    &= \left\lVert \matr{W}^{-1} \cdot \sum_{i = 1}^k{\sum_{e \in E}{\left\lvert\matr{I}_e \matr{f}_i' \right\rvert}} \right\rVert \\
    &= \conge \left( \left\{\matr{I}_e \matr{f}_i'\right\}_{i \in [k], \, e \in E} \right).
\end{aligned}
\end{equation}
Observe now that for any \(i \in [k]\) and \(e \in E\) it holds by definition that \(\matr{B} \matr{I}_e \matr{f}_i' = \matr{\chi}_i^e\), which means that the flows in \(\{\matr{I}_e \matr{f}_i'\}_{i \in [k], \, e \in E}.\) fulfill the demands of \(D'\). Thus, we can conclude that \(\opt(D) \geq \opt(D')\).

We can use the result above and the one from Equation \eqref{equation_cong_d_leq_cong_dp} to restrict our domain of maximization for the competitive ratio. If we define \(\mathcal{D}\) to be the set of all multisets of demands on \(V\), and \(\mathcal{D'}\) to be the set of all multisets of demands that are obtained by splitting the optimal routing into isolated ``single-edge'' demands (as we did above for \(D\)), then we can write:
\begin{equation}
\label{equation_maxD_equals_maxDp}
\rho(\matr{A}) = \max_{D \in \mathcal{D}}{\frac{\conge\left(\{\matr{A} \matr{\chi}\}_{\matr{\chi} \in D}\right)}{\opt(D)}} = \max_{D' \in \mathcal{D'}}{\frac{\conge\left(\{\matr{A} \matr{\chi}\}_{\matr{\chi} \in D'}\right)}{\opt(D')}}.
\end{equation}
A useful observation is that the multiset of flows \(\{\matr{I}_e \matr{f}_i'\}_{i \in [k], \, e \in E}\) not only routes the demands in \(D'\), but it does so in an optimal manner, that is, \(\conge \left( \{\matr{I}_e \matr{f}_i'\}_{i \in [k], \, e \in E} \right) = \opt(D')\) . This means that, after constructing \(D'\) from a given \(D\), it will always be optimal to fulfill each of the demands in \(D'\) by routing flow only on a single edge.

To see why this holds, assume towards a contradiction that there exists a different collection of demand-fulfilling flows \(\{\matr{f}_{i, e}\}_{i \in [k], \, e \in E}\) for \(D'\) such that \(\conge \left( \{\matr{f}_{i, e}\}_{i \in [k], \, e \in E} \right) < \conge \left( \{\matr{I}_e \matr{f}_i'\}_{i \in [k], \, e \in E} \right)\). But since the norm is monotonic, we can obtain the following inequality from
\eqref{equation_rewrite_opt}:
\begin{align*}
    \opt(D) &= \left\lVert \matr{W}^{-1} \cdot \sum_{i = 1}^k{\sum_{e \in E}{\left\lvert\matr{I}_e \matr{f}_i' \right\rvert}} \right\rVert \\
    &= \conge \left( \{\matr{I}_e \matr{f}_i'\}_{i \in [k], \, e \in E} \right) \\
    &> \conge \left( \{\matr{f}_{i, e}\}_{i \in [k], \, e \in E} \right) \\
    &= \left\lVert \matr{W}^{-1} \cdot \sum_{i = 1}^k{\sum_{e \in E}{\left\lvert \matr{f}_{i, e} \right\rvert}} \right\rVert \\
    &\geq \left\lVert \matr{W}^{-1} \cdot \sum_{i = 1}^k{\left\lvert 
 \sum_{e \in E}{\matr{f}_{i, e}}\right\rvert} \right\rVert \\
    &= \conge \left( \left\{ \sum_{e \in E} \matr{f}_{i, e} \right\}_{i \in [k]} \right).
\end{align*}
Notice that, since \(\left\{ \matr{f}_{i, e} \right\}_{i \in [k], e \in E}\) routes the demands in \(D'\), by linearity it follows that \(\left\{ \sum_{e \in E} \matr{f}_{i, e} \right\}_{i \in [k]}\) will route the demands in \(D\). Thus, we obtained a contradiction to the initial assumption that \(\left\{ \matr{f}_1', \dots, \matr{f}_k' \right\}\) routes the demands of \(D\) optimally.

Now we can leverage the results above to prove:
\begin{equation}
    \label{equation_max_over_concrete_optima}
    \rho(\matr{A}) = \max_{D' \in \mathcal{D}'}{\frac{\conge\left(\{\matr{A} \matr{\chi}\}_{\matr{\chi} \in D'}\right)}{\opt(D')}} = \max_{\matr{x} \in \R^E}{\frac{\left\lVert \matr{W}^{-1} \cdot \sum_{e \in E}{\left( \matr{x}(e) \cdot \lvert \matr{A} \matr{\chi}_e \rvert \right)} \right\rVert}{\left\lVert \matr{W}^{-1} \matr{x} \right\rVert}}.
\end{equation}
This will proceed in two steps. We first show that the LHS is not bigger than the RHS and then vice-versa, which ultimately implies equality.

To prove that the LHS is at most the RHS, recall that our previously defined \(D\) was arbitrary in the set \(\mathcal{D}'\), therefore it suffices to show that there exists \(\matr{x} \in \R^E\) such that the two ratios in Equation \eqref{equation_max_over_concrete_optima} are equal. Since, we have already shown that \(\opt(D') = \conge \left( \{\matr{I}_e \matr{f}_i'\}_{i \in [k], \, e \in E} \right)\), choose \(\matr{x} = \sum_{i = 1}^k{\sum_{e \in E}{\left\lvert \matr{I}_e \matr{f}_i' \right\rvert}}\), which trivially satisfies the equality between the denominators. Note that this pick for \(\matr{x}\) gives:
\[\matr{x} = \sum_{i = 1}^k{\sum_{e \in E}{\left\lvert \matr{I}_e \matr{f}_i' \right\rvert}} \implies \matr{x}(e) = \sum_{i = 1}^k{\lvert \matr{f}_i'(e) \rvert} \text{, for any } e \in E.\]
In other words, \(\matr{x} \in \R^E\) is the vector that collects the absolute values of the flows routed on each edge by the multiset of flows \(\left\{ \matr{I}_e \matr{f}_i' \right\}_{i \in [k], \, e \in E}\). By definition, these flows route the demands in \(D'\), which means that we can rewrite the congestion caused by \(\matr{A}\) as follows:
\begin{equation}
\begin{aligned}
\label{equation_congestion_of_edge_demands}
    \conge\left(\{\matr{A} \matr{\chi}\}_{\matr{\chi} \in D'}\right) &= \left\lVert \matr{W}^{-1} \cdot \sum_{\matr{\chi} \in D'}{\left\lvert \matr{A} \matr{\chi} \right\rvert} \right\rVert \\
    &= \left\lVert \matr{W}^{-1} \cdot \sum_{i = 1}^k{\sum_{e \in E}{\left\lvert \matr{A} \matr{\chi}_i^e \right\rvert}} \right\rVert \\
    &= \left\lVert \matr{W}^{-1} \cdot \sum_{i = 1}^k{\sum_{e \in E}{\left\lvert \matr{A} \cdot \left( \matr{f}_i'(e) \cdot \matr{\chi}_e \right) \right\rvert}} \right\rVert \\
    &= \left\lVert \matr{W}^{-1} \cdot \sum_{i = 1}^k{\sum_{e \in E}{\left(\left\lvert \matr{f}_i'(e) \right\rvert \cdot \left\lvert \matr{A}  \matr{\chi}_e \right\rvert \right) }} \right\rVert \\
    &= \left\lVert \matr{W}^{-1} \cdot \sum_{e \in E}{\left( \left\lvert \matr{A}  \matr{\chi}_e \right\rvert \cdot \sum_{i = 1}^k{\left(\left\lvert \matr{f}_i'(e) \right\rvert \right)} \right)} \right\rVert \\
    &= \left\lVert \matr{W}^{-1} \cdot \sum_{e \in E}{\left( \left\lvert \matr{A}  \matr{\chi}_e \right\rvert \cdot \matr{x}(e) \right)} \right\rVert.
\end{aligned}
\end{equation}
This finally gives the equality between the numerators, which concludes the first part of the proof of the equality in \eqref{equation_max_over_concrete_optima}. In order to complete the proof, we now show that for any \(\matr{x} \in \R^E\) one can construct \(D_0 \in \mathcal{D}\) such that:
\begin{equation}
\label{equation_intermediate_inequality}
\frac{\left\lVert \matr{W}^{-1} \cdot \sum_{e \in E}{\left( \matr{x}(e) \cdot \lvert \matr{A} \matr{\chi}_e \rvert \right)} \right\rVert}{\left\lVert \matr{W}^{-1} \matr{x} \right\rVert} \leq \frac{\conge\left(\{\matr{A} \matr{\chi}\}_{\matr{\chi} \in D_0}\right)}{\opt(D_0)}.
\end{equation}
Note that, due to \eqref{equation_maxD_equals_maxDp}, proving the statement above is sufficient to conclude the proof. To this extent, consider an arbitrary vector \(\matr{x} \in \R^E\), and let \(D_0 = \left\{ \lvert \matr{x}(e) \rvert \cdot \matr{\chi}_e \right\}_{e \in E}\). Due to the weights being positive and the norm being monotonic, we can obtain the following inequality between the target numerators:
\begin{align*}
    \left\lVert \matr{W}^{-1} \cdot \sum_{e \in E}{\left( \matr{x}(e) \cdot \lvert \matr{A} \matr{\chi}_e \rvert \right)} \right\rVert &\leq \left\lVert \matr{W}^{-1} \cdot \sum_{e \in E}{\left(\lvert \matr{x}(e) \rvert \cdot \lvert \matr{A} \matr{\chi}_e \rvert \right)} \right\rVert \\
    &= \left\lVert \matr{W}^{-1} \cdot \sum_{e \in E}{\left\lvert \matr{A} \left(\lvert \matr{x}(e) \rvert \cdot \matr{\chi}_e \right) \right\rvert } \right\rVert \\
    &= \conge\left(\{\matr{A} \cdot \matr{\chi}\}_{\matr{\chi} \in D_0}\right).
\end{align*}
Observe now that for any \(e \in E\) it holds that \(\matr{B} \matr{I}_e \lvert \matr{x} \rvert = \lvert \matr{x}(e) \rvert \cdot \matr{\chi}_e\), therefore the set of flows \(\left\{ \matr{I}_e \lvert \matr{x} \rvert \right\}_{e \in E}\) fulfills the demands in \(D_0\). We can thus use Fact \ref{fact_monotonic_norms_are_absolute} to obtain the following inequality for the denominators:
\[\left\lVert \matr{W}^{-1} \matr{x} \right\rVert = \left\lVert \matr{W}^{-1} \lvert \matr{x} \rvert \right\rVert = \left\lVert \matr{W}^{-1} \cdot \sum_{e \in E}{\lvert \matr{I}_e \lvert \matr{x} \rvert \rvert } \right\rVert = \conge \left( \left\{ \matr{I}_e \lvert \matr{x} \rvert \right\}_{e \in E}\right) \geq \opt(D_0).\]
The previous two inequalities conclude the proof of the statement in \eqref{equation_intermediate_inequality}.

The last paragraphs focused on showing that the RHS of \eqref{equation_max_over_concrete_optima} is at most the LHS, which ultimately proves the desired equality (since the other direction of the inequality had been proven previously).

We now return to Equation \eqref{equation_max_over_concrete_optima} with the aim of rewriting it to obtain the sought form. To that extent, note that the incidence matrix \(\matr{B}\) is the matrix whose columns are all \(\matr{\chi}_e\) with \(e \in E\). From this fact and the assumption that all weights are positive, we can rewrite \eqref{equation_max_over_concrete_optima} to get the desired equality:
\begin{align*}
    \rho(\matr{A}) &= \max_{\matr{x} \in \R^E}{\frac{\left\lVert \matr{W}^{-1} \cdot \sum_{e \in E}{\left( \matr{x}(e) \cdot \lvert \matr{A} \matr{\chi}_e \rvert\right)} \right\rVert}{\left\lVert \matr{W}^{-1} \matr{x} \right\rVert}} \\
    &= \max_{\matr{x} \in \R^E}{\frac{\left\lVert\sum_{e \in E}{\left( \matr{x}(e) \cdot \lvert \matr{W}^{-1} \matr{A} \matr{\chi}_e \rvert \right)} \right\rVert}{\left\lVert \matr{W}^{-1} \matr{x} \right\rVert}} \\
    &= \max_{\matr{x} \in \R^E}{\frac{\left\lVert \lvert \matr{W}^{-1} \matr{A} \matr{B} \rvert \matr{x} \right\rVert}{\left\lVert \matr{W}^{-1} \matr{x} \right\rVert}} \\
    &= \max_{\matr{x} \in \R^E}{\frac{\left\lVert \lvert \matr{W}^{-1} \matr{A} \matr{B} \rvert \matr{W} \matr{W}^{-1} \matr{x} \right\rVert}{\left\lVert \matr{W}^{-1} \matr{x} \right\rVert}} \\
    &= \max_{\matr{y} \in \R^E}{\frac{\left\lVert \lvert \matr{W}^{-1} \matr{A} \matr{B} \rvert \matr{W} \matr{y} \right\rVert}{\left\lVert \matr{y} \right\rVert}} \\
    &= \max_{\matr{y} \in \R^E}{\frac{\left\lVert \lvert \matr{W}^{-1} \matr{A} \matr{B} \matr{W} \rvert \matr{y} \right\rVert}{\left\lVert \matr{y} \right\rVert}} \\
    &= \left\lVert \lvert \matr{W}^{-1} \matr{A} \matr{B} \matr{W} \rvert\right\rVert.
\end{align*}
\end{proof}



\end{document}